\newtheorem{theorem}{Theorem}
\newtheorem{lemma}{Lemma}
\title{Streaming Maximal Matching with Bounded Deletions}
\author[1]{Sanjeev Khanna}
\author[2]{Christian Konrad\footnote{C.K. was supported by EPSRC New Investigator Award EP/V010611/1.}}
\author[3]{Jacques Dark\footnote{Part of the work of J.D. was done while at the University of Warwick,
supported by an EMEA Microsoft Research PhD studentship and European Research Council grant
ERC-2014-CoG 647557.}}
\affil[1]{Department of Computer and Information Science, University of Pennsylvania, Philadelphia, PA, US,
\texttt{sanjeev@cis.upenn.edu}
}
\affil[2]{Department of Computer Science,
University of Bristol, Bristol, UK,
\texttt{christian.konrad@bristol.ac.uk}}
\affil[3]{Unaffiliated Researcher, London, UK,
\texttt{thejdark@gmail.com}}
\date{}
\DeclareMathOperator{\poly}{poly}
\DeclareMathOperator{\Exp}{\mathbb{E}}
\newcommand{\D}{\mathsf{\bf D}}
\newcommand{\LHMM}{{\sf LHMM}}
\begin{document}
\maketitle
\thispagestyle{empty} 

\vspace{-0.5cm}
\begin{abstract}
We initiate the study of the \textsf{Maximal Matching} problem in bounded-deletion graph streams. In this setting, a graph $G$ is revealed as an arbitrary sequence of edge insertions and deletions, where the number of insertions is unrestricted but the number of deletions is guaranteed to be at most $K$, for some given parameter $K$. The single-pass streaming space complexity of this problem is known to be $\Theta(n^2)$ when $K$ is unrestricted, where $n$ is the number of vertices of the input graph. 
In this work, we present new randomized and deterministic algorithms and matching lower bound results that together give a tight understanding (up to poly-log factors) of how the space complexity of \textsf{Maximal Matching} evolves as a function of the parameter $K$: The randomized space complexity of this problem is $\tilde{\Theta}(n \cdot \sqrt{K})$, while the deterministic space complexity is $\tilde{\Theta}(n \cdot K)$.
We further show that if we relax the maximal matching requirement to an $\alpha$-approximation to \textsf{Maximum Matching}, for any constant $\alpha > 2$, then the space complexity for both, deterministic and randomized algorithms, strikingly changes to $\tilde{\Theta}(n + K)$. 

\smallskip

A key conceptual contribution of our work that underlies all our algorithmic results is the introduction of the {\em hierarchical maximal matching} data structure, which computes a hierarchy of $L$ maximal matchings on the substream of edge insertions, for an integer $L$. This deterministic data structure allows recovering a \textsf{Maximal Matching} even in the presence of up to $L-1$ edge deletions, which immediately yields an optimal deterministic algorithm with space $\tilde{O}(n \cdot K)$. To reduce the space to $\tilde{O}(n \cdot \sqrt{K})$, we compute only $\sqrt{K}$ levels of our hierachical matching data structure and utilize a randomized linear sketch, i.e., our {\em matching repair data structure}, to repair any damage due to edge deletions. Using our repair data structure, we show that the level that is least affected by deletions can be repaired back to be globally maximal. The repair data structure is computed independently of the hierarchical maximal matching data structure and stores information for vertices at different scales with a gradually smaller set of vertices storing more and more information about their incident edges. The repair process then makes progress either by rematching a vertex to a previously unmatched vertex, or by strategically matching it to another matched vertex whose current mate is in a better position to find a new mate in that we have stored more information about its incident edges. 

\smallskip

Our lower bound result for randomized algorithms is obtained by establishing a lower bound for a generalization of the well-known \textsf{Augmented-Index} problem in the one-way two-party communication setting that we refer to as \textsf{Embedded-Augmented-Index}, and then showing that an instance of \textsf{Embedded-Augmented-Index} reduces to computing a maximal matching in bounded-deletion streams. To obtain our lower bound for deterministic algorithms, we utilize a compression argument to show that a deterministic algorithm with space $o(n \cdot K)$ would yield a scheme to compress a suitable class of graphs below the information-theoretic threshold.
\end{abstract}

\clearpage
\pagenumbering{arabic} 

\newpage

\section{Introduction}
In the streaming model of computation, an algorithm is tasked with computing the solution to a given problem by performing a single pass over the input data while maintaining a memory of sublinear size in the input. 

\smallskip
\noindent \textbf{Graph Streams.} Streaming algorithms have been studied for more than 25 years \cite{ams96}, and, since early on, significant attention has been devoted to the study of graph problems in this setting \cite{hrr98,fkmsz04}. The dominant model is the {\em insertion-only} model, where the input stream consists of a sequence of edges that make up a graph. In 2012, Ahn et al. \cite{agm12} initiated the study of graph problems in the {\em insertion-deletion} setting, where the input stream consists of a sequence of edge insertions and deletions so that previously inserted edges can be deleted again. Subsequent research has revealed that some problems are equally difficult to solve in both the insertion-only and the insertion-deletion models in that algorithms require roughly the same amount of space, e.g., \textsf{Connectivity} \cite{agm12} and \textsf{Maximum Independent Set} \cite{hssw12}, while others are significantly harder to solve in the context of deletions, e.g., \textsf{Maximal Matching},   \textsf{Maximum Matching}, and \textsf{Minimum Vertex Cover} \cite{akl16,dk20}.

Handling deletions requires a very different algorithmic toolkit than when deletions are not allowed, in particular, randomization is crucial   even for solving seemingly trivial problems. For example, in insertion-deletion streams, deterministic algorithms for outputting a single edge of the input graph require space $\Omega(n^2)$, where $n$ is the number of vertices in the input graph, while this problem can be solved via $\ell_0$-sampling \cite{jst11,cf14} and poly-logarithmic space when randomization is allowed. The predominant algorithmic technique for designing insertion-deletion streaming algorithms are  {\em linear sketches}, which are perfectly suited to the insertion-deletion model since they can naturally handle an arbitrary (unbounded) number of deletions (and insertions). It is even known that, under some conditions, linear sketches are universal for the class of insertion-deletion streaming algorithms in that the behavior of any such algorithm can be replicated by one that solely relies on the computation of linear sketches \cite{lnw14, ahlw16} (see also \cite{kp20} that further investigates said conditions). It is therefore not surprising that most previous work on streaming algorithms either considers the insertion-only setting with no deletions at all, or the insertion-deletion setting that allows for an unbounded/arbitrary number of deletions.

\smallskip
\noindent \textbf{Bounded-deletion Graph Streams.}
In this work, we take a more refined view and consider graph problems in {\em bounded-deletion streams}. In this setting, the input stream consists of an unrestricted number of edge insertions and at most $K$ edge deletions, for some integer $K$. We are interested in how the space requirements of streaming algorithms change as a function of $K$. 

Bounded-deletion, and, similarly, bounded-length, streams have been considered at least since the work of Cormode et al. \cite{cjmm17} in 2017 (see, e.g., \cite{cjmm17, bgw20, kp20, zmwaa21, agm12, zaam22, ejwy22, hkmms22,zaammr23,cgsv24}), who 
exploited a bound on the  input stream length for the design of a sampling-based algorithm for estimating the size of a largest matching in bounded-arboricity graphs. Besides this work, bounded-deletion graph streams appear also in Kallaugher and Price \cite{kp20}, who gave a graph problem for which linear sketching is exponentially harder than bounded-deletion streaming, and Chou et al. \cite{cgsv24}, who studied CSPs in this context, which in some cases can also be represented as graph problems. 

The bounded-deletion setting is well-motivated from a practical perspective. In all growing data sets, deletions are naturally less frequent than insertions since otherwise the data set would not grow in size. Regarding massive graphs, in social networks, new connections/friendships between entities are established much more often than deleted, and significantly more new hyperlinks are introduced into the web graph than deleted.

\smallskip
\noindent \textbf{The Maximal Matching Problem.} In this work, we study the \textsf{Maximal Matching} problem through the lens of bounded-deletion streams. A {\em matching} in a graph is a subset of non-adjacent edges, and a {\em maximal matching} is one that is inclusion-wise maximal, i.e., it cannot be extended by adding an edge outside the matching to it. Specifically, our goal is to understand the function $f(n,K)$ that describes the streaming space complexity of computing a maximal matching in an $n$-vertex graph as a function of the bound $K$ on the number of deletions. We will, however, also consider approximation algorithms to \textsf{Maximum Matching}, where the goal is to compute a matching of size at least $\alpha \cdot |M^*|$, with approximation factor $0 < \alpha \le 1$ and $M^*$ being a largest matching. It is well-known that a maximal matching constitutes a $\frac{1}{2}$-approximation to \textsf{Maximum Matching}.

The single-pass streaming space complexity of \textsf{Maximal Matching} is well-understood in both insertion-only streams and insertion-deletion streams with unrestricted deletions. In insertion-only streams, the simple \textsc{Greedy} matching algorithm, which greedily inserts every incoming edge into an initially empty matching if possible and constitutes the main building block of most streaming algorithms for matchings (e.g. \cite{fkmsz04,k18,kk20,alt21,kn21,kns23}), yields an $O(n \log n)$ space algorithm for \textsf{Maximal Matching}, and this is also tight. In insertion-deletion streams, Dark and Konrad \cite{dk20} showed an $\Omega(n^2)$ space lower bound for computing a \textsf{Maximal Matching}, strengthening a previous lower bound of $n^{2-o(1)}$~\cite{akl16} (see also \cite{k15}). In other words, there is essentially no better algorithm than storing the entire graph. The lower bound of Dark and Konrad requires $\Theta(n^2)$ deletions in the input stream.  When the number of deletions is restricted to be $K \in [n^2]$, the Dark and Konrad lower bound can be restated as showing that, for any $O(1)$-approximation to \textsf{Maximum Matching}, $\Omega(K)$ space is necessary. 

Summarizing the previous discussion, we know that $f(n,0) = \tilde{\Theta}(n)$, and that $f(n,n^2) = \Theta(n^2)$, but for arbitrary $K \in [n^2]$, known results only tell us that $f(n,K) = \Omega(n+K)$. On the algorithmic side, even when $K$ is just $O(n)$, no results are known for computing a maximal matching that utilize $o(n^2)$ space (i.e. do better than storing the entire graph). Our current state of knowledge, for instance, gives us the inequalities $\Omega(n) \le f(n,n) \le O(n^2)$, leaving a huge gap between the upper and lower bounds.
This raises a natural question: Does the required space $f(n,K)$ abruptly transitions to $\Omega(n^2)$ when $K$ is just $O(n)$, or is there is an algorithm that achieves a space complexity that smoothly interpolates between the $\tilde{O}(n)$ space for insertion-only streams, and $\Omega(n^2)$ space for unrestricted deletion streams? We show that, indeed, the space complexity smoothly interpolates between the two extremes.

\subsection{Our Results}
We resolve the space complexity of bounded-deletion streaming algorithms for \textsf{Maximal Matching} and show that $f(n,K)= \tilde{\Theta}(n \cdot \sqrt{K})$. We obtain our result  by designing a new space-efficient algorithm for bounded deletion streams, as well as by establishing a stronger new lower bound. 

\begin{theorem}\label{thm:ub}
There is a single-pass randomized  $\tilde{O}(n \cdot \sqrt{K})$ space streaming algorithm that, with high probability, outputs a maximal matching in any dynamic graph stream with at most $K$ deletions.
\end{theorem}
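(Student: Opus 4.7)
The plan is to maintain two independent data structures over the stream in parallel: (i) the hierarchical maximal matching structure \LHMM{} with $L = \lceil \sqrt{K} \rceil + 1$ levels, fed only the substream of edge insertions, and (ii) a randomized linear \emph{repair sketch} built over the complete insertion-deletion stream. Each will occupy $\tilde{O}(n\sqrt{K})$ space. By the pigeonhole principle, among $K$ deletions distributed over more than $\sqrt{K}$ levels of \LHMM{}, some level $i^*$ is hit by at most $\sqrt{K}$ deletions; at the end of the stream I identify $i^*$ and use the repair sketch to turn the content of $M_{i^*}$ into a maximal matching of the final graph.

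\LHMM{} is the same building block used by the companion deterministic $\tilde{O}(nK)$ algorithm, and I treat its guarantee abstractly: $L$ maximal matchings of the insertion-only graph are maintained in $\tilde{O}(nL)$ space, and the level $M_{i^*}$ least damaged by deletions can be read off as a partial matching of the final graph in which only $O(\sqrt{K})$ vertices are ``exposed'' in the sense that their mates have been deleted. If $L > K$, this is already sufficient (the deterministic regime); in the randomized regime, the substantive task is to repair these $O(\sqrt{K})$ exposed vertices against the incidence structure of the full insertion-deletion graph, which we cannot afford to store explicitly.

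The repair sketch follows the multi-scale design hinted at in the abstract: I maintain a nested chain $V = V_0 \supseteq V_1 \supseteq \cdots \supseteq V_{\log \sqrt{K}}$ of vertex subsets of geometrically decreasing size, and for each $v \in V_j$ store a linear sparse-recovery sketch (for instance an $\ell_0$-sampler or dynamic sparse-recovery structure) capable of recovering up to $\Theta(2^j)$ of $v$'s currently present incident edges. The aggregate space across all scales telescopes to $\tilde{O}(n \sqrt{K})$. To repair an exposed vertex $u$, the procedure tries in turn: (a) extract a currently unmatched neighbor of $u$ from $u$'s sketch and match the pair directly; failing that, (b) match $u$ to a matched neighbor $u'$ whose current mate $u''$ lies in a deeper scale $V_j$ with strictly larger recovery budget, and recursively re-match $u''$. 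Step (b) realises a short augmenting path of length $O(\log \sqrt{K})$.

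The step I expect to be the main obstacle is the joint analysis. I must show that (i) the multi-scale sampling ensures the number of exposed or recursively displaced vertices that fall to scale $j$ is $O(\sqrt{K} / 2^j)$ with high probability, matching the $\Theta(2^j)$ recovery capacity at that scale; (ii) whenever step (a) fails, a valid partner $u'$ with a better-placed mate $u''$ actually exists, so that the recursion always makes progress and terminates in $O(\log \sqrt{K})$ steps; and (iii) upon termination no edge of the final graph has both endpoints unmatched, which I would prove by combining \LHMM{}'s invariant that the unmatched set of a non-deleted level is graph-independent with a charging argument showing that the repair procedure resolves every maximality violation introduced by the $\le\sqrt{K}$ deletions at level $i^*$. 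All sparse-recovery routines succeed with probability $1 - 1/\poly(n)$ by a union bound over the $O(n)$ recovery queries, yielding the high-probability correctness claimed in Theorem~\ref{thm:ub}.
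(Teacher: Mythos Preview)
Your high-level skeleton matches the paper exactly: run \textsf{Hierarchical-Greedy} with $L=\Theta(\sqrt{K})$ levels on the insertion substream, pick the level $i^*$ hit by at most $\sqrt{K}$ deletions, and repair the $O(\sqrt{K})$ exposed vertices using a multi-scale linear sketch that either rematches a vertex directly, certifies its full neighbourhood, or ``swaps'' it for a better-positioned mate at a deeper scale. So the architecture is right. The gaps are all in the instantiation of the repair sketch and its analysis.

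\medskip

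\textbf{Scale depth and termination.} Your chain has depth $\log\sqrt{K}$ with recovery budget $\Theta(2^j)$ at scale $j$, so the deepest scale can only recover $\Theta(\sqrt{K})$ incident edges. That gives you no termination guarantee: a vertex at the last scale may have degree $\gg\sqrt{K}$, all of whose sampled neighbours are matched and none of whose mates are better placed. The paper fixes this by making the scale hierarchy a function of $n$, not $K$: the sets $V_i$ shrink by a factor $\log n$, there are $R=\Theta(\log n/\log\log n)$ scales, and the recovery budget at scale $i$ is $\Theta(\log^{i+3}n)$, so at the last scale the budget is $\Omega(n)$ and the full neighbourhood is always recovered. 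This is the only way case~(ii) (``$u$ recovers all its incident edges'') is guaranteed to eventually trigger.

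\textbf{High-probability progress per step.} With your factor-$2$ scaling, if $|V_{j+1}|/n = 2^{-(j+1)}$ and you have $\Theta(2^j)$ samples at scale $j$, the expected number of sampled mates landing in $V_{j+1}$ is $\Theta(1)$, not enough for a w.h.p.\ guarantee. The paper's parameters (samples $\approx\log^{i+3}n$, density of $V_{i+1}\approx 1/\log^{i+1}n$) give $\Theta(\log^2 n)$ expected hits, which concentrates.

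\textbf{Handling all $\sqrt{K}$ repairs: replication, not load balancing.} Your item (i) --- bounding how many displaced vertices ``fall to scale $j$'' --- is the one place where your plan diverges materially from the paper and where it is most likely to fail. The repair process is adaptive: the current matching when you repair the $j$-th exposed vertex depends on the outcomes of the first $j-1$ repairs, hence on the very samplers you would be reusing. The paper sidesteps this entirely by storing $2\sqrt{K}$ \emph{independent} copies of the per-vertex samplers (the index $j$ in $N_{i,j}(v)$), one per exposed vertex to be repaired; each repair chain consumes its own fresh copy. Since one copy costs $\tilde{O}(n)$, the $\sqrt{K}$ copies cost $\tilde{O}(n\sqrt{K})$ --- this is precisely where the $\sqrt{K}$ factor in the sketch space comes from, not from a telescoping sum across scales. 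Without that replication your load-balancing claim has no clean independence to stand on.

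\medskip

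Finally, your item (iii) is also underspecified: after repair you still must greedily extend by the surviving edges of $M_1,\dots,M_{i^*-1}$ (and the recovered neighbourhoods) to get global maximality; this is the content of the paper's Lemma~\ref{lem:key-1}, and it is what makes the exposed/fully-known dichotomy sufficient.
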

\newcounter{counterUB}  
\setcounter{counterUB}{\value{theorem}}

\begin{theorem}\label{thm:lb}
Any possibly randomized single-pass streaming algorithm that outputs a maximal matching with probability at least $2/3$ in dynamic graph streams with at most $K$ deletions requires $\Omega(n \cdot \sqrt{K})$ space.
\end{theorem}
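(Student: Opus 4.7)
The plan is to derive the lower bound from a communication-complexity reduction in two stages: first establishing a lower bound for a generalized Augmented-Index problem that I will call \textsf{Embedded-Augmented-Index} (EAI), then reducing this problem to computing a maximal matching in a bounded-deletion stream.

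\textbf{Step 1: the EAI problem.} In the one-way two-party model, Alice receives $t$ strings $x^{(1)},\dots,x^{(t)}\in\{0,1\}^N$ and Bob receives a query $(j,i)\in[t]\times[N]$ together with all later blocks $x^{(\ell)}$ for $\ell>j$ and the suffix $x^{(j)}_{i+1},\dots,x^{(j)}_{N}$; Bob must output $x^{(j)}_i$ after a single message from Alice. I plan to prove a communication lower bound of $\Omega(tN)$, which becomes $\Omega(n\sqrt{K})$ once we set $t=\Theta(\sqrt{K})$ and $N=\Theta(n)$. The natural route is an information-cost argument: for each fixed $j$, conditioning on the independently drawn later blocks $x^{(j+1)},\dots,x^{(t)}$ reduces EAI to a standard \textsf{Augmented-Index} instance on $x^{(j)}$, whose information cost is $\Omega(N)$ by the classical Miltersen--Nisan--Safra--Wigderson argument; summing the information costs over $j\in[t]$ via a direct-sum step lifts this to $\Omega(tN)$.

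\textbf{Step 2: reduction to bounded-deletion maximal matching.} Given an EAI instance, Alice and Bob jointly simulate the streaming algorithm. Alice's portion of the stream consists of $t$ blocks of edge insertions into an $n$-vertex graph, where block $\ell$ encodes $x^{(\ell)}$ through a bipartite gadget whose matching structure toggles bit-by-bit (for instance, a collection of disjoint length-two paths whose middle edge is present exactly when the corresponding bit is $1$). She then forwards the memory state of the algorithm to Bob. Using his knowledge of the later blocks and of the suffix of $x^{(j)}$, Bob issues $O(\sqrt{K})$ edge deletions per block across blocks $j,j+1,\dots,t$, totaling $\Theta(K)$ deletions and staying within the stream's deletion budget. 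The gadget is designed so that, after these deletions, any maximal matching the algorithm outputs must include (or exclude) a specific edge according to the value of $x^{(j)}_i$, letting Bob read the bit off the output. Since total communication is upper-bounded by the space of the streaming algorithm, the EAI lower bound translates directly to an $\Omega(n\sqrt{K})$ space lower bound.

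\textbf{The main obstacle.} The heart of the proof is constructing the encoding/peeling gadget so that three competing demands are met simultaneously: (i) a single block must independently encode $N=\Theta(n)$ bits of Alice's input; (ii) Bob must be able to ``peel'' each later block using only $O(\sqrt{K})$ deletions, rather than the $\Omega(N)$ that a naive encoding would demand; and (iii) after peeling, the maximal matching constraint at the target vertex must reveal $x^{(j)}_i$ unambiguously, regardless of the choices the algorithm made earlier in the stream. My instinct is to use a layered gadget in which Bob's deletions advance a pointer across blocks in a way that mirrors the Augmented-Index recursion, so that the per-block deletion cost scales like $\sqrt{K}$ rather than like $N$. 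Once this gadget is in place, composing it with the EAI lower bound from Step~1 completes the proof.
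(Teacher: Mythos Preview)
Your high-level plan (prove an EAI lower bound, then reduce to bounded-deletion maximal matching) matches the paper, but the reduction you sketch has a real gap that the paper avoids by a different decomposition.

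You parametrize EAI as $t=\Theta(\sqrt{K})$ \emph{temporal} blocks of $N=\Theta(n)$ bits each, on a shared $n$-vertex graph, and then ask Bob to ``peel'' each later block with only $O(\sqrt{K})$ deletions. You yourself flag this as the main obstacle, and it is a genuine one: a block that independently encodes $\Theta(n)$ bits on $\Theta(n)$ vertices will in general require $\Omega(n)$ deletions to neutralize, and your ``layered pointer'' instinct is not a construction. Your requirement (iii) is also too strong: there is no gadget in which the single bit $x^{(j)}_i$ is revealed \emph{unambiguously} by \emph{any} maximal matching; the paper only gets constant-probability recovery of the target edge (via random XOR masks and random row/column permutations that make all diagonal positions symmetric), and that is enough.

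The paper's reduction sidesteps the peeling issue entirely by partitioning along \emph{vertices} rather than time. It sets $s=\Theta(n/\sqrt{K})$ and $t=K$ (note the swapped roles compared to yours), and Alice's graph is a vertex-disjoint union $G_1\,\dot\cup\,\cdots\,\dot\cup\,G_s$ where each $G_i$ is a scaled-down DK20 instance on $\Theta(\sqrt{K})$ vertices encoding $X_i\in\{0,1\}^K$. Bob's index $I\in[s]$ selects one subgraph, and \emph{all} of Bob's (at most $K$) deletions go into $G_I$ alone, clearing off-diagonal entries of a submatrix so that the surviving diagonal is a matching that any maximal matching must hit in a constant fraction. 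No other subgraph is touched, so there is nothing to peel. In the EAI problem as the paper defines it, Bob receives only the suffix $X_I[J{+}1,\dots,t]$ of the \emph{selected} string, not the other strings, which is exactly what this construction needs.

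In short: your Step~1 is fine, but your Step~2 architecture creates the very obstacle you cannot resolve. Switching to the vertex-disjoint decomposition (many small DK20 instances, deletions concentrated in one of them) removes the obstacle and makes the reduction go through cleanly.
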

\newcounter{counterLB}  
\setcounter{counterLB}{\value{theorem}}

It is worth noting that Theorem~\ref{thm:ub} implies that, up to poly-log factors, the $\Theta(n^2)$ deletions used in the lower bound by Dark and Konrad are necessary in order to obtain an $\Omega(n^2)$ space lower bound for \textsf{Maximal Matching}. Together these results show that the streaming space complexity of maximal matching increases smoothly as a function of the number of deletions, and there is no abrupt phase transition. On one extreme, when the number of deletions is $O(1)$, that is, when the deletions change the graph negligibly, the space complexity of $\tilde{\Theta}(n)$ is essentially the same as the space needed to store a maximal matching in an insertion-only stream. But then as the number of deletions reaches $\Omega(n^2)$, that is, when deletions can alter almost the entire graph, the space complexity rises to $\tilde{\Theta}(n^2)$, essentially the same as storing the entire graph.

We also observe that our work is the first that establishes a complete characterization of the space complexity of streaming algorithms for a graph problem as a function of the number of edge deletions $K$. 

We also show that randomization is crucial to achieving the space complexity given in Theorem~\ref{thm:ub}, as deterministic algorithms for maximal matching necessarily require $\Theta(n \cdot K)$ space. 

\begin{theorem}
\label{thm:det_maximal}
There is a single-pass streaming algorithm that uses $\tilde{O}(n \cdot K)$ space and outputs a maximal matching in any dynamic graph stream with at most $K$ deletions. Moreover, any deterministic algorithm for \textsf{Maximal Matching} requires $\Omega(n \cdot K)$ space.
\end{theorem}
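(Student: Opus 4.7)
\medskip
\noindent\textbf{Upper Bound Plan.}
My plan is to build a \emph{hierarchical maximal matching} with $L = K+1$ levels $M_1,\dots,M_L$ on the insertion substream in a single pass. When an edge $e=(u,v)$ is inserted, I place it at the smallest-index level $M_i$ in which both endpoints are currently unmatched, discarding $e$ if no such $i$ exists; when an edge is deleted, I remove it from its hosting level. Each $M_i$ is a matching with at most $n/2$ edges, so total space is $O(nL \log n) = \tilde{O}(nK)$. At end-of-stream I extract a maximal matching $M^{\star}$ of the current graph $I \setminus D$ from the surviving stored edges via a level-aware greedy procedure. Correctness will hinge on the following combinatorial claim: for any edge $(u,v)$ in the final graph but not in $M^{\star}$, at least one of $u,v$ must be matched in $M^{\star}$. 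If $(u,v)$ itself sits at some stored level, this follows from maximality of $M^{\star}$ on the stored union; otherwise $(u,v)$ was rejected at every one of the $L$ levels during streaming, meaning that at each level $i$ some blocker edge $f_i \in M_i$ covered $u$ or $v$ at $(u,v)$'s arrival time, and since at most $K = L-1$ deletions have occurred, pigeonhole yields a surviving blocker $f_{i^{\star}}$, which is then promoted by the extraction procedure to keep $u$ or $v$ matched in $M^{\star}$.

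\medskip
\noindent\textbf{Lower Bound Plan.}
For the deterministic $\Omega(nK)$ bound I will use a compression / encoding argument. The goal is to construct a family $\mathcal{F}$ of $K$-bounded-deletion graph streams on $n$-vertex inputs with $|\mathcal{F}| \ge 2^{\Omega(nK)}$ such that any two distinct streams drive any deterministic algorithm into distinct memory states, whence a counting argument gives space $\ge \log_2 |\mathcal{F}| = \Omega(nK)$. The hard family is parameterised by preference assignments to $\Theta(n)$ ``query'' vertices on one side of a bipartite graph: each query vertex is assigned an ordered candidate list of length $\Theta(K)$, encoding $\Omega(K)$ bits per vertex and $\Omega(nK)$ bits overall. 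The common prefix inserts all candidate edges together with a blocker perfect matching that initially saturates every query vertex. To distinguish two prefixes $P,P'$ that disagree at some vertex $v$ and list position $j \le K$, the suffix uses at most $K$ deletions -- namely $v$'s blocker edge and $v$'s first $j-1$ candidate edges -- so that any maximal matching of the resulting graph must pair $v$ with its $j$-th candidate, producing observably different outputs under $P$ and $P'$.

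\medskip
\noindent\textbf{Expected Main Obstacles.} For the upper bound, the crux is designing the level-aware extraction so that a surviving blocker $f_{i^{\star}} = (u,w)$ actually forces $u$ (or $v$) into $M^{\star}$, rather than merely having $w$ matched via some unrelated edge and leaving $(u,v)$'s own endpoints uncovered; this seems to require processing surviving blockers in a level-respecting priority order and possibly cascading local repairs through the levels to route matches back toward $u$ or $v$. For the lower bound, the main challenge is designing the candidate/blocker structure so that a single $K$-deletion suffix always suffices to expose any pairwise disagreement between prefixes -- in particular ensuring the relevant disagreement is probeable within the deletion budget no matter where it lies in the preference list, and that the algorithm is left with no matching freedom that could hide the disagreement.
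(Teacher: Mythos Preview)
Your plan is essentially the paper's, but you are making the extraction step harder than it needs to be. You apply pigeonhole \emph{per rejected edge} (``some blocker $f_{i^\star}$ survives'') and then worry, correctly, that a surviving blocker $(u,w)$ need not force $u$ into $M^\star$. The paper applies pigeonhole \emph{globally}: with $L=K+1$ levels and at most $K$ deletions, some entire level $M_\ell$ suffers \emph{zero} deletions. Starting from this intact $M_\ell$ and greedily adding the surviving edges of $M_1,\dots,M_{\ell-1}$ yields a maximal matching immediately, because any edge $(u,v)$ that was discarded during streaming had a blocker in $M_\ell$ at the time of arrival, and that blocker is still present. Your stated obstacle simply disappears once you notice the stronger pigeonhole conclusion; no cascading repairs are needed. (There is a side case: if fewer than $L$ levels are ever created, you have stored every inserted edge and can compute a maximum matching outright.)

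\paragraph{Lower bound.} Here your plan diverges from the paper and has a real gap. Your fooling-set argument hinges on the claim that, after deleting $v$'s blocker and its first $j-1$ candidate edges, ``any maximal matching must pair $v$ with its $j$-th candidate.'' This does not follow from your construction: $v$ still has candidates $c_j,c_{j+1},\dots,c_K$ available, and maximality does not single out $c_j$. If those later candidates are unblocked the algorithm may match $v$ to any of them; if they are blocked (matched to dummies) the algorithm may legitimately leave $v$ unmatched. Either way the algorithm retains exactly the ``matching freedom'' you flag as an obstacle, and two prefixes that agree on $\{c_{j+1},\dots,c_K\}$ but differ at $c_j$ need not produce different outputs. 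More fundamentally, an ``ordered list'' carries no information beyond the underlying edge set, so your family size is governed by set choices, and pinning the output to a specific edge within that set is precisely what maximality fails to do.

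The paper sidesteps this by exploiting determinism in a different way. It takes a bipartite family in which every $A$-vertex has degree $2K$ into a $B$-side of size $3K$ (replicated into $n/(4K)$ components), and proves that even after \emph{any} $K$ deletions, every maximal matching still matches every $A$-vertex. Bob then reuses Alice's single message with $K+1$ different deletion suffixes for each $A$-vertex $a$: run with no deletions to learn an incident edge $e_1$; rerun deleting $\{e_1\}$ to learn $e_2$; and so on up to deleting $\{e_1,\dots,e_K\}$. This extracts $K+1$ distinct incident edges per $A$-vertex, i.e.\ $\Theta(nK)$ edges total, from a message that must therefore have length $\Omega(nK)$ by a counting/compression argument on the family. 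The key difference from your plan is that the paper never needs to force a \emph{specific} matched edge---it only needs the vertex to be matched to \emph{some} new edge on each rerun, which the degree-$2K$ design guarantees.
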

\newcounter{counterDET}  
\setcounter{counterDET}{\value{theorem}}

Finally, we show that, unlike in unrestricted dynamic graph streams, in the bounded-deletion model, the space complexity of \textsf{Maximal Matching} behaves fundamentally differently to the space complexity of computing an $O(1)$-approximation to \textsf{Maximum Matching}. Let $g_c(n,K)$ be the streaming space complexity of computing a $c$-approximation to \textsf{Maximum Matching} in an $n$-vertex graph when the number of deletions is bounded by $K$. Then we know that for any constant $c > 2$, $g_c(n,0) = \tilde{\Theta}(n)=f(n,0)$, and that $g_c(n,n^2) = \Theta(n^2)=f(n,n^2)$. Furthermore, for arbitrary $K \in [n^2]$, we know that $g_c(n,K) = \Omega(n+K)$. We show that in a sharp contrast to the \textsf{Maximal Matching} problem, there is an algorithm that achieves the space complexity of $\tilde{O}(n+K)$, that is, $g_c(n,K) = \tilde{\Theta}(n+K)$, and this is achieved by a deterministic algorithm.

\begin{theorem}\label{thm:ub2}
For any $\epsilon > 0$, there is a deterministic single-pass streaming algorithm that uses $O((n + K/\epsilon) \cdot \log n)$ space and outputs a $(2+\epsilon)$-approximation to \textsf{Maximum Matching} in any dynamic graph stream with at most $K$ deletions.
\end{theorem}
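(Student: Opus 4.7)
\noindent\textbf{Proof plan for Theorem~\ref{thm:ub2}.} I will combine an online greedy matching with an auxiliary reserve buffer of size $O(K/\epsilon)$, and argue correctness via a case split on the size of the final optimum matching.

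The algorithm maintains a greedy matching $M$: upon insertion of $(u,v)$, add $(u,v)$ to $M$ if both endpoints are currently unmatched in $M$; upon deletion of $e \in M$, simply remove $e$. In parallel, we maintain a reserve $R$ of at most $CK/\epsilon$ edges of the current graph not in $M$, for an appropriate constant $C$, serving as candidate replacement edges. Upon deletion of an edge $e \in M$, the algorithm scans $R$ for edges whose both endpoints have become unmatched in $M$ and promotes such edges to $M$. The reserve admission and eviction policy must be chosen carefully (see obstacle below) to avoid adversarial flooding. The total space is $O(n\log n)$ for $M$ plus $O((K/\epsilon)\log n)$ for $R$, matching the claim.

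Let $G$ denote the final graph and set $c := 2(2+\epsilon)/\epsilon$. We split into two cases based on $|M^*(G)|$.

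\emph{Case A (large optimum, $|M^*(G)| \geq cK$).} Let $M^+$ be the greedy matching on the insertions-only substream. Since $G\subseteq G^+$, we have $|M^+| \geq |M^*(G^+)|/2 \geq |M^*(G)|/2$. After at most $K$ matching-edge deletions, $|M| \geq |M^+| - K \geq |M^*(G)|/2 - K \geq |M^*(G)|/(2+\epsilon)$, where the last inequality follows by the choice of $c$. Hence the greedy matching $M$ alone delivers the desired $(2+\epsilon)$-approximation in this regime, without ever consulting $R$.

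\emph{Case B (small optimum, $|M^*(G)| < cK$).} Here $|M^*(G)| = O(K/\epsilon)$, so a matching of the target size is of size $O(K/\epsilon)$ as well. Because the reserve has $\Omega(K/\epsilon)$ slots, it is large enough to record information about every matching-relevant edge. The plan is to argue that the final matching (greedy $M$, augmented by promotions from $R$ triggered at each matching-edge deletion) leaves only $o(|M^*(G)|/\epsilon)$ augmenting edges in $G$, and is therefore within the $(2+\epsilon)$-factor by the standard inequality $|M^*| \leq 2(|M| + k')$, where $k'$ is the size of a maximum matching among edges with both endpoints unmatched in $M$.

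\emph{Main obstacle.} The principal technical difficulty lies in Case B: designing a reserve policy that provably defeats adversarial interleavings of insertions and deletions. A naive first-come-first-served (or FIFO) policy is vulnerable to \emph{decoy} floods, where many rejected edges incident to a single matched vertex consume the reserve's capacity before the truly useful edges arrive. A robust policy likely requires imposing structural constraints on $R$, for example per-vertex caps on stored edges, or treating $R$ as a secondary matching disjoint from $M$ supplemented by a bounded number of backup edges per matched vertex. The exact $(2+\epsilon)$-factor is then obtained through an amortized charging argument tying the number of ``missed'' augmenting edges in the final graph to the total deletion budget $K$ scaled by $O(1/\epsilon)$, ensuring that the algorithm's final matching is near-maximal in the relevant sense.
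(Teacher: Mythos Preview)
Your proposal has a genuine gap: Case B is left unsolved. You yourself flag the ``Main obstacle''---designing a reserve admission/eviction policy that defeats adversarial interleavings---and offer only suggestions for what such a policy might look like (per-vertex caps, a secondary matching). But this is precisely the heart of the problem. With a reserve of size $O(K/\epsilon)$, the adversary can insert $\Theta(n)$ decoy edges incident on currently matched vertices, flooding $R$ (or forcing evictions of useful edges) long before the critical deletion arrives. Even when $|M^*(G)| = O(K/\epsilon)$, the graph can have up to $\Theta(n \cdot K/\epsilon)$ edges, so ``large enough to record information about every matching-relevant edge'' is not literally true, and you have not supplied the charging argument that would turn it into a workable invariant. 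Without a concrete policy and a proof that it leaves few augmenting edges, Case B does not go through, and the theorem is not established.

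A smaller issue: in Case A, the inequality $|M| \geq |M^+| - K$ is correct, but not for the reason you state. Your online $M$ is not $M^+$ with $K$ edges removed; it evolves differently because vertices freed by deletions get re-matched by later insertions, which can cause later edges to enter $M^+$ but not $M$. The inequality does hold, but proving it requires a potential argument (e.g.\ tracking $U_t = V(M_t)\setminus V(M^+_t)$ and showing each ``$M^+$-only'' insertion is paid for by an earlier ``$M$-only'' insertion), not just ``$K$ deletions remove at most $K$ edges.''

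The paper takes a route that sidesteps your obstacle entirely. It runs a \emph{budgeted hierarchical greedy} on the \emph{insertion substream only}, with total edge budget $B = n + K/\epsilon$: matchings $M_1, M_2, \ldots$ are built greedily level by level, and whenever the budget overflows, an edge is evicted from the current top level. All $K$ deletions are simply stored and applied only in post-processing. The key structural lemma is that all levels except the last are never touched by eviction, so $M_1,\ldots,M_{L-1}$ form a valid hierarchical maximal matching on the full insertion stream, with total size at least $B - n/2 > K/\epsilon$. By pigeonhole, some level $M_j$ loses at most an $\epsilon$-fraction of its edges to deletions; extending $M_j$ downward through the surviving edges of $M_1,\ldots,M_{j-1}$ gives a maximal matching of $G$ minus at most $\epsilon|M_j|$ edges, hence a $(2+\epsilon)$-approximation. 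By building all structure on the monotone insertion stream and deferring deletions to the end, the adversarial-interleaving problem that blocks your Case B simply never arises---that is the idea you are missing.
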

\newcounter{counterUB2}  
\setcounter{counterUB2}{\value{theorem}}


\subsection{Techniques}
\label{subsec:Techniques}
\subparagraph{Bounded-deletions  \textsf{Maximal Matching} Algorithm.}
We start by motivating and explaining the main ideas underlying our main algorithmic result, namely, Theorem~\ref{thm:ub}. We start by observing that in absence of any deletions, one can simply store a maximal matching to solve the problem. On the other hand, when deletions are unbounded, we can simply store $\Theta(n^2)$ 
$\ell_0$-samplers to solve the problem by recreating the surviving graph. Our algorithmic approach below is based on a new data structure, called {\em hierarchical maximal matching}, which in conjunction with a hierarchical approach for storing $\ell_0$-samplers, allows us to design an algorithm whose space complexity smoothly interpolates between these two extremes.

Let $I$ denote the edges inserted during the stream, and let $D$ denote the edges deleted in the stream. Observe that $I$ and $D$ may be multisets since an edge can be inserted, subsequently deleted, and then reinserted again, and so on. Together, these sets define the graph $G(V,E)$ revealed by the dynamic stream where $E = I \setminus D$.

 Our algorithm has two phases where in the first phase, we build a {\em hierarchical} collection of maximal matchings using only edges in $I$ along with a data structure $\D$ for {\em matching repair} to account for deletions. In the second phase, we recover a maximal matching by starting with the least damaged maximal matching in our hierarchical collection, and repairing it using the edges stored in the data structure $\D$. We now explain this approach in more detail.

Suppose that we create a sequence of hierarchical maximal matchings, say, $M_1, M_2, ..., M_{L}$, using only the edges inserted in the stream (that is, the multiset $I$), ignoring all deletions. Specifically, we start by initializing $M_1, M_2, ..., M_L$ to be $\emptyset$. Now whenever an edge, say $(x,y)$, is inserted, we first try adding it to matching $M_1$. If one of $x$ or $y$ is already matched, then we try adding this edge $(x,y)$ to $M_2$, and continue in this manner. If we are unsuccessful all the way up to $M_L$, then we simply discard this edge. It is clear that this hierarchical collection can be implemented as a streaming algorithm using only $\tilde{O}(n \cdot L)$ space, since each matching can have only $O(n)$ edges. Now suppose there is an index $\ell \in [L]$ such that {\em none} of the edges in $M_{\ell}$ are deleted, that is, $M_{\ell} \cap D = \emptyset$. Then we can recover a maximal matching $M$ in the graph $G(V,E)$ as follows. We initialize $M = M_{\ell}$, and then greedily add edges in $(M_1 \cup M_2 \cup ... M_{\ell-1}) \setminus D$ to $M$ so that $M$ is a maximal matching w.r.t. edges in $(M_1 \cup M_2 \cup \dots \cup M_{\ell}) \setminus D$. We now claim that $M$ must be maximal in $G(V,E)$. Suppose not, then there is an edge $(u,v) \in E$, such that neither $u$ nor $v$ are matched in $M$. But then the edge $(u,v)$ must not be present in any of $(M_1 \cup M_2 \cup \dots \cup M_{\ell}) \setminus D$. This means that when the edge $(u,v)$ arrived in the stream, at least one of $u$ or $v$ must have been matched in $M_{\ell}$. This now is a contradiction to
our assumption, $M_{\ell} \cap D = \emptyset$, and hence it cannot be that both $u$ and $v$ are unmatched in $M$.
Thus $M$ is maximal with respect to $E = I \setminus D$.

Of course, the only way to ensure that there exists some index $\ell \in [L]$ such that {\em none} of the edges in $M_{\ell}$ are deleted is to set $L = \Omega(K)$, and doing so immediately yields our deterministic algorithm that uses space $\tilde{O}(n \cdot K)$.  
To obtain an $\tilde{O}(n \cdot \sqrt{K})$ space algorithm, we instead set $L = \sqrt{K}$, and observe that this ensures that there exists an index $\ell \in [L]$ such that at most $\sqrt{K}$ edges in $M_{\ell}$ are deleted. The second phase of the algorithm now starts on the task of repairing $M_{\ell}$ to be a maximal matching. Suppose an edge $(u,w)$ is deleted from $M_{\ell}$. We would like to see if the edges in $E \setminus (M_1 \cup M_2 \cup ... M_{\ell})$ can match $u$ and/or $w$ again. Let us focus on vertex $u$. Our data structure $\D$ will store $\Theta(\log^3 n)$ $\ell_0$-samplers for each vertex $x \in V$, with each sampler sampling uniformly at random from the incident edges on $x$. We open these $\ell_0$-samplers for $u$, and if we find an edge $(u,v)$ such that the vertex $v$ is unmatched, we add it to $M_{\ell}$. Otherwise, there are two possibilities: (a) the degree of $u$ is $O(\log^2 n)$ and we have recovered all incident edges on $u$, or (b) the degree of $u$ is $\Omega(\log^2 n)$, and we recover at least $\Omega(\log^2 n)$ distinct neighbors of $u$, all of whom are matched in $M_{\ell }$. In case (a), we are immediately in good shape because we have recovered all edges incident on $u$, and we can use them to match $u$ at the end if one of the neighbors of vertex $u$ remains unmatched. The more interesting situation is case (b) where on the one hand, the recovered information is not sufficient to repair $u$. On the other hand, we cannot rule out the possibility that $u$ could have been matched, if only we had allocated more space in $\D$ to recover additional edges incident on $u$. We next describe how we eliminate this uncertainity.

We create a collection $V_1, V_2, ..., V_R$ of subsets of $V$ where the set $V_i$ is a random subset of $V$ of size $n/\log^i n$, and $R = \Theta(\log n / \log \log n)$. For each vertex $w \in V_i$, our data structure $\D$ stores $\Theta(\log^{i+3} n)$ 
$\ell_0$-samplers. It is worth noting that while we are storing more and more $\ell_0$-samplers per vertex as $i$ increases, the total space used by vertices in $V_i$ remains $\tilde{O}(n)$ as the size of $V_i$ shrinks proportionately. 
We now return to repairing vertex $u$ that ended in case (b) above: we recovered a set $N(u)$ of $\Omega(\log^2 n)$ neighbors of $u$, such that each vertex in $N(u)$ is currently matched. Since every vertex is contained in $V_1$ with probability $1/\log n$, it follows that, with high probability, there must be a vertex $v_1 \in N(u)$ such that $v_1$'s mate in $M_{\ell}$, say $u_1$, belongs to $V_1$. We now add the edge $(u,v_1)$ to $M_{\ell}$, thereby matching vertex $u$, but now creating a new unmatched vertex, namely, $u_1$. It may seem as if we have not made any progress and the vertices $u$ and $u_1$ have simply traded places. But in fact we have made {\em some progress}. Since $u_1 \in V_1$, compared to vertex $u$, our data structure $\D$ stores $\Theta(\log n)$ times {\em more information} about edges incident on $u_1$, better positioning us to find a mate for $u_1$. We now repeat the above process for vertex $u_1$, either successfully matching it to an unmatched vertex, or recovering all incident edges on $u_1$, or finding a matched vertex $v_2 \in N(u_1)$ such that $v_2$'s mate in $M_{\ell }$, say $u_2$, belongs to $V_2$. The repair successfully terminates if either of the first two events occurs. Otherwise, we now add the edge $(u_1,v_2)$ to $M_{\ell}$, thereby matching vertex $u_1$, but now creating a new unmatched vertex in $M_{\ell }$, namely, $u_2$. We then continue this process from $u_2$. The repair process is guaranteed to successfully terminate when we reach $V_R$ since for each vertex in the final set $V_R$, our data structure $\D$ stores $\Theta(n \log n)$ $\ell_0$-samplers each, enough to recover their entire neighborhoods. 

To summarize, using $\tilde{O}(n)$ space, the data structure $\D$ provides a mechanism to repair an unmatched vertex $u$ in $M_{\ell}$ due to edge deletions. Since $M_{\ell}$ can have up to $\sqrt{K}$ edge deletions, to repair all of them, the data structure $\D$ independently replicates the above strategy $O(\sqrt{K})$ times to repair all deletions in $M_{\ell }$. The overall space used by the algorithm is thus $\tilde{O}(n \cdot \sqrt{K})$ for storing the edges in the hierarchical matching, and another $\tilde{O}(n \cdot  \sqrt{K})$ space for the repair data structure $\D$, giving us the desired space bound of $\tilde{O}(n \cdot \sqrt{K})$.

Finally, it is worth underlining that the computations of the hierarchical matching and the matching repair data structures are independent. Furthermore, the matching repair data structure is computed by a linear sketching algorithm, as it is usual in the insertion-deletion setting, and the hierarchical matching data structure is computed by a (non-linear) deterministic sketch, i.e., a Greedy algorithm, as is it typical in the insertion-only setting.

\subparagraph{Space Lower Bound for Bounded-deletions  \textsf{Maximal Matching}.}
We now explain the ideas behind our main lower bound result, established in Theorem~\ref{thm:lb}.
Our lower bound is best understood as an extension of the tight $\Omega(n^2 / \alpha^3)$ space lower bound by Dark and Konrad \cite{dk20} for one-pass insertion-deletion streaming algorithms that compute an $\alpha$-approximation to \textsf{Maximum Matching}\footnote{\cite{as22} gives an algorithm that achieves this space bound up to constant factors, see also \cite{k15,akl16,ccehmmv16}.}. We will denote this lower bound as the DK20 lower bound in the following. 
Due to the well-known fact that any maximal matching is at least half the size of a largest matching, DK20 immediately yields an $\Omega(n^2)$ space lower bound for \textsf{Maximal Matching}. 

In the following, we will treat the DK20 lower bound as if it was established for  \textsf{Maximal Matching}. DK20 is proved in the one-way two-party communication setting. In this setting, the first party, denoted Alice, holds the edge set $E$ of a bipartite graph $G=(A, B, E)$, and the second party, denoted Bob, holds edge deletions $D \subseteq E$. Alice sends a message to Bob, and, upon receipt, Bob is required to output a maximal matching in the graph $G' = (A, B, E \setminus D)$, i.e., Alice's input graph with Bob's deletions applied. Since Alice and Bob can simulate the execution of an insertion-deletion streaming algorithm for \textsf{Maximal Matching} on their input, by forwarding the memory state of the algorithm from Alice to Bob in form of a message, a lower bound on the size of the message used in the communication setting therefore also constitutes a lower bound on the memory required by such algorithms. 

In DK20, Alice holds a bipartite random graph $G=(A, B, E)$ with $|A| = |B| = n$ so that every edge is inserted into the graph with probability $\frac{1}{2}$. Bob holds subsets $A' \subseteq A$ and $B' \subseteq B$, with $|A'| = |B'| = \frac{4}{5} n$, and inserts deletions into the vertex-induced subgraph $G[A' \cup B']$ such that, after the deletions are applied, the remaining edges in $G[A' \cup B']$ form a large matching $M$. It is proved that recovering a constant fraction of the edges of $M$ -- a task that a protocol for \textsf{Maximal Matching} necessarily must achieve -- requires space $\Omega(n^2)$. On a technical level, DK20 give a sophisticated reduction to the well-known one-way two-party \textsf{Augmented-Index} communication problem. In \textsf{Augmented-Index}, Alice holds a bitstring $X \in \{0, 1\}^n$, and Bob holds a index $J \in [n]$ as well as the suffix $X[J+1, n]$. Alice sends a message to Bob who is tasked with reporting the bit $X[J]$. It is well-known that solving \textsf{Augmented-Index} with probability bounded away from $\frac{1}{2}$ requires a message of size $\Omega(n)$. In DK20, the reduction is such that the bits $X$ of an \textsf{Augmented-Index} instance correspond to edges in the random graph with the property that, with constant probability, the bit $X[J]$ corresponds to an edge in the matching $M$. This construction is such that the mapping between $X[J]$ and the edges in $M$ is random and, most importantly, unknown to the underlying protocol. Hence, a protocol that  reports a constant fraction of the edges of $M$ will therefore report the edge that corresponds to bit $X[J]$ with constant probability, which then solves \textsf{Augmented-Index}.

Bob holds $\Theta(n^2)$ deletions in the DK20 construction. In order to decrease the number of deletions to $K$, we proceed as follows. In our lower bound construction, Alice holds a graph $G=G_1 \ \dot{\cup} \ G_2 \  \dot{\cup} \ \dots \  \dot{\cup} \ G_s$, which is a vertex-disjoint union of $s$ graphs $(G_i)_{1 \le i \le s}$. Each graph $G_i$ has $\sqrt{K}$ vertices and constitutes a scaled-down version of Alice's input graph in the DK20 construction, namely a bipartite random graph with edge probability $\frac{1}{2}$. Bob holds an index $I \in \{1, \dots, s \}$, which identifies one of the graphs $G_I$. Furthermore, Bob holds the counterpart to Bob's input in the DK20 construction to graph $G_I$, i.e., edge deletions that apply to $G_I$. Bob leaves all the other graphs $G_j$, with $j \neq I$, untouched. 

We establish a direct sum argument to show that this problem requires a large message size. The key insight is that, since Alice does not know the index $I$, the message sent from Alice to Bob must contain sufficient information so that Bob can output a maximal matching no matter from which graph $G_i$ Bob has deleted some edges. In other words, Alice and Bob must be able to solve $s$ independent copies of the DK20 lower bound instance. Since each graph $G_i$ has $\sqrt{K}$ vertices, DK20 implies that $\Omega(K)$ bits are required for solving one copy. Hence, overall, space $\Omega(K \cdot s)$ is required. Last, to make sure that the final graph has $n$ vertices, we need to set $s = \Theta(\frac{n}{\sqrt{K}})$, which delivers the claimed $\Omega(n \cdot \sqrt{K})$ space lower bound.

To implement this approach, we first define a generalization of \textsf{Augmented-Index} denoted \textsf{Embedded-Augmented-Index}, where Alice holds $s$ binary strings $X_1, \dots, X_s \in \{0, 1\}^t$, Bob holds two indices $I \in [s]$ and $J \in [t]$ as well as the suffix $X_I[J+1, t]$, and the objective for Bob is to output the bit $X_I[J]$.
Using by now standard information-theoretic arguments, we show that this problem requires $\Omega(t \cdot s)$ bits of communication. Next, following the proof outline of DK20, we show that a bounded-deletion streaming algorithm can be used to solve \textsf{Embedded-Augmented-Index}, which completes the proof.


\subparagraph{Lower Bound for Deterministic Algorithms for Bounded Deletions \textsf{Maximal Matching}.}


Our lower bound for deterministic algorithms works with a family of bipartite graphs $G=(A, B, E)$ that have the property that, even when any $K$ edges $D \subseteq E$ are deleted from $G$, then still every maximal matching in $G - D$ matches all $A$-vertices. 
The family of graphs $\mathcal{G}_K(n)$ is obtained as follows. Given the deletion budget $K$, define $\mathcal{H}_K$ to be the family of bipartite graphs $H=(A, B, E)$ with $|A| = K, |B| = 3K$, and the degree of every $A$-vertex is $2K$. Then, $\mathcal{G}_K(n)$ is the family of graphs consisting of the disjoint union of any $n/(4K)$ graphs from $\mathcal{H}_K$. 

Similar to our lower bound for randomized algorithms, we prove our lower bound in the one-way two-party communication setting. Alice holds a graph $G \in \mathcal{G}_K(n)$ as input, and Bob holds up to $K$ edge deletions $D \subseteq E(G)$. We now claim that a protocol $\pi$ for \textsf{Maximal Matching} that outputs a maximal matching in the graph $G - D$ allows Bob to learn $K+1$ incident edges on every $A$-vertex. Once this claim is established, we finalize the proof by observing that, if the message from Alice to Bob was of size $o(n \cdot K)$, then $\pi$ also constitutes an encoding of these overall $\frac{n(K+1)}{4}$ edges`using only $o(n \cdot K)$ bits. This in turn can be used to encode the graph class $\mathcal{G}_K(n)$ with fewer bits than dictated by the information-theoretic threshold - a contradiction.

To see that Bob can learn $K+1$ edges incident on every $A$-vertex in input graph $G \in \mathcal{G}_K(n)$, Bob proceeds as follows. Let $a \in A$ be any vertex and $\pi$ the message received from Alice. Bob then completes the protocol without introducing any deletions, recovering a maximal matching $M_1$ that, as discussed above, necessarily matches the vertex $a$. Let $e_1$ be the edge incident on $a$ in $M$. Next, Bob completes another run of the protocol starting with message $\pi$ and feeding the edge $e_1$ as edge deletion into the protocol. In doing so, Bob obtains another maximal matching $M_2$ that necessarily matches $a$, thereby recovering a second edge $e_2$ incident on $a$. Repeating this process, Bob learns more and more edges incident on $a$, feeding these edges as deletions into the protocol in the next simulation. Overall, Bob can repeat this process $K$ times, thereby exhausting the deletion budget, which allows him to learn $K+1$ edges incident on $a$. Finally, this process can be repeated for every $a \in A$, which completes the argument.

\subparagraph{Bounded-deletions $(2+o(1))$-Approximate Matching Algorithm.}
We now briefly describe the main idea behind the algorithmic result stated in Theorem~\ref{thm:ub2}. Once again, similar to Theorem~\ref{thm:ub}, our algorithm utilizes the hierarchical matching data structure, but with some crucial modifications.

Specifically, this time we implement a {\em budgeted version} of the hierarchical matching data structure on the insertions part of the stream, where instead of having a fixed number of levels, we now have an overall budget $B$ on the total number of edges stored in the hierarchical matching data structure. The data structure maintains a lexicographic maximality property whereby an edge insertion that may cause the number of stored edges to exceed the budget $B$ triggers removal of an edge from the {\em current last level} of the data structure. The number of levels in the data structure forms a {\em bitonic} sequence over time, in that, it may keep increasing as long as the space budget $B$ has not been reached, but once that happens, the number of levels steadily decreases as enforcing the lexicographic property can trigger many 
deletions from the last levels of the hierarchical matching as the remainder of the insertion stream is processed.
 
An immediate consequence of this lexicographic property is that if upon termination, we have a sequence of hierarchical matchings, say, $M_1, M_2, ..., M_{L}$, then $M_1, M_2, ..., M_{L-1}$ must necessarily constitute a $(L-1)$-level hierarchical maximal matching data structure. We also separately store all deletions that appear in the stream. Now by setting our budget $B$ to be $\tilde{\Theta}(n+K)$, we can show that when we apply deletions, one of the $(L-1)$ maximal matchings, say $M_i$, necessarily loses at most an $o(1)$-fraction of its edges. We can then extend $M_i$ to be an ``almost'' maximal matching $M$ in the overall stream, by greedily inserting surviving edges in $M_1, M_2, ..., M_{i-1}$ into it. Here the ``almost'' refers to the fact that unlike the maximal matching algorithm of Theorem~\ref{thm:ub}, we do not need to repair the missing $o(1)$-fraction edges in $M_i$ to ensure maximality. The final matching $M$ is easily shown to be a $(2+o(1))$-approximate matching, giving us the desired result.

\subsection{Outline}
In Section~\ref{sec:prelim}, we give notation and the definition of bounded-deletion graph streams as well as some key information-theoretic facts that we use to prove our lower bound. 
We then present our bounded-deletion algorithm for \textsf{Maximal Matching} in Section~\ref{sec:alg}, and give our matching lower bound in Section~\ref{sec:lb}. Our results on deterministic algorithms are then presented in Sections~\ref{sec:det} and \ref{sec:maximum-matching}, i.e., we settle the space complexity of deterministic bounded-deltion algorithms for \textsf{Maximal Matching} in Section~\ref{sec:det}, and we give our $\tilde{O}( n + K)$ space algorithm for $O(1)$-approximation to \textsf{Maximum Matching} in Section~\ref{sec:maximum-matching}. We conclude with some directions for future research in Section~\ref{sec:conclusion}.

\section{Preliminaries}\label{sec:prelim}
\subparagraph{Notation.} For an integer $k$ we write $[k] := \{1, 2, \dots, k \}$. For a graph $G=(V, E)$, we denote by $N(v)$ the neighborhood of a vertex $v \in V$.

\subparagraph{Bounded-deletion Graph Streams.} Given an integer $K$, in the {\em bounded-deletion graph stream setting}, the input stream consists of a sequence of edge insertions and at most $K$ edge deletions, which together make up the edge set of a graph $G=(V, E)$. An edge deletion can only occur if the edge was previously inserted. Observe that it is possible that an edge is introduced, subsequently deleted, and then introduced again, and so on. We assume that every prefix of the input stream describes a simple graph, i.e., a previously inserted edge can only be inserted again if it was deleted in the meantime.  Observe, however, that the substream of edge insertions $I$ may constitute a multiset, and, when regarded as a graph, then $I$ yields a multigraph. The same holds for the substream of edge deletions $D$. In the following, we will write $E= I \setminus D$ to denote the edges of the final graph described by the input stream.  We will see that our algorithm considers the substream of edge insertions and deals with edge multiplicities in a rather natural way. 

We also assume that the parameter $K$ is known in advance. This is a necessary assumption since, if $K$ was not known, any algorithm that uses space $o(n^2)$ could not produce a maximal matching in case $K= \Theta(n^2)$, as demonstrated by the Dark and Konrad lower bound \cite{dk20}. As it is standard, for simplicity, we also assume that algorithms know the vertex set $V$ of the graph described by the input stream in advance.

Our algorithm makes use of $\ell_0$-sampling. Given an insertion-deletion stream that describes a vector on $m$-coordinates, $\ell_0$-sampling refers to the process of sampling a uniform random non-zero coordinate of this vector. We will use the $\ell_0$-samplers of Jowhari et al. \cite{jst11}:
\begin{theorem}[Jowhari et al. \cite{jst11}]
    There exists a $\ell_0$-sampler that uses $O(\log^2 m \log(1/\delta))$ bits of space and outputs a uniform random non-zero coordinate $i \in [m]$ with probability at least $1 - \delta$.
\end{theorem}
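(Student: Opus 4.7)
The plan is to implement the standard geometric-subsampling-plus-1-sparse-recovery approach. First I would set $L = \lceil \log_2 m \rceil + 1$ levels and draw a hash function $h : [m] \to \{0,1\}^L$ from a (at least) pairwise-independent family; at level $j$, a coordinate $i \in [m]$ \emph{survives} if the first $j$ bits of $h(i)$ are zero, so that the surviving set $S_j$ contains each coordinate independently with probability $2^{-j}$. All $L$ levels are fed the stream in parallel, but each level only incorporates updates whose coordinate survives at that level.

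At each level $j$ I would keep a small 1-sparse recovery sketch occupying $O(\log m)$ bits. Concretely, over a prime field $\mathbb{F}_p$ with $p = \Theta(m^c)$ for a large constant $c$, the sketch maintains $A_j = \sum_i v_i \cdot \mathbf{1}[i \in S_j]$, $B_j = \sum_i i \cdot v_i \cdot \mathbf{1}[i \in S_j]$, and a polynomial fingerprint $C_j = \sum_i v_i z^i \cdot \mathbf{1}[i \in S_j]$ for a uniformly random $z \in \mathbb{F}_p$ chosen in advance. If the surviving substream really is 1-sparse with sole nonzero coordinate $i^\star$ of value $v^\star$, then $A_j = v^\star$, $B_j/A_j = i^\star$, and $C_j = A_j \, z^{i^\star}$; otherwise the fingerprint identity fails except with probability $O(m/p) = O(m^{1-c})$ by Schwartz--Zippel. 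On query, the sampler outputs the coordinate recovered from the smallest level whose 1-sparse test accepts.

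For correctness, let $k = |\{i : v_i \neq 0\}|$ and set $j^\star = \lfloor \log_2 k \rfloor$. At level $j^\star$ the expected number of surviving nonzero coordinates is $\Theta(1)$; a second-moment argument using the pairwise independence of $h$ then shows that exactly one such coordinate survives with some positive constant probability, so a single copy of the structure succeeds with constant probability. Uniformity of the sample among the nonzero coordinates follows from a swap-symmetry argument: for any two nonzero coordinates $i, i'$, swapping $h(i) \leftrightarrow h(i')$ is a measure-preserving bijection that exchanges the roles of $i$ and $i'$ in the output without changing whether, or at which level, the tester accepts. A union bound over the $L = O(\log m)$ levels absorbs all fingerprint failures, so a single copy uses $O(\log m)$ counters at $O(\log m)$ bits each, totalling $O(\log^2 m)$ bits. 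Running $O(\log(1/\delta))$ independent copies in parallel and returning the first successful output amplifies the success probability to $1 - \delta$, yielding the claimed $O(\log^2 m \log(1/\delta))$ bits.

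The main obstacle is ensuring that the 1-sparse tester is both compact and reliable enough for the union bound over $O(\log m \cdot \log(1/\delta))$ testers to leave the overall error negligible, while simultaneously preserving the uniformity of the sample. Because the sampler commits to the \emph{smallest} accepting level, a spurious acceptance at a level below $j^\star$ is the dangerous failure mode, as it would return an essentially arbitrary index rather than a uniform one. This is controlled by taking the field $\mathbb{F}_p$ polynomially large so that the per-tester error is $O(1/\mathrm{poly}(m))$, which is exactly what the polynomial-fingerprint Schwartz--Zippel bound delivers. A second delicate point is that the $O(\log(1/\delta))$ repetitions must use mutually independent hash functions and fingerprints, so that their success events are genuinely independent and the amplification is legitimate rather than correlated with a single unlucky configuration of $h$.
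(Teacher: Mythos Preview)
The paper does not prove this theorem; it is quoted as a black-box result from Jowhari, Sa\u{g}lam, and Tardos~\cite{jst11} and used as a primitive in the algorithm. So there is no ``paper's own proof'' to compare against: you have supplied a proof sketch where the paper simply cites.

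Your sketch is the standard geometric-subsampling-plus-$1$-sparse-recovery construction and is essentially correct in outline, with the right space accounting ($O(\log m)$ levels, $O(\log m)$ bits per level, $O(\log(1/\delta))$ independent repetitions). One point that would need tightening in a full proof is the uniformity argument: the swap $h(i)\leftrightarrow h(i')$ is measure-preserving for a fully random function, but for a generic pairwise-independent family (e.g.\ affine functions $x\mapsto ax+b$ over a prime field) the family is not closed under such swaps, so the bijection you describe does not land back in the family. The actual JST construction handles this either by using a higher-independence family or by settling for a distribution that is only $(1\pm m^{-c})$-close to uniform, which is what ``uniform'' typically means for $\ell_0$-samplers in this literature; the theorem as stated in the paper should be read in that sense. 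Apart from this, your second-moment step and the Schwartz--Zippel control of spurious $1$-sparse acceptances are the right ingredients.
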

Observe that an edge stream of an $n$-vertex graph  describes a $\Theta(n^2)$-dimensional vector. $\ell_0$-sampling on a graph stream therefore produces a uniform random edge of the input graph with space $O(\log^2 n \cdot \log(1/\delta))$. We will use $\ell_0$-sampling to sample uniform random edges incident to a given vertex $v \in V$, by feeding only edge insertions and deletions incident on $v$ into the sampler.

\subparagraph{Information-theoretic Arguments} 
Our lower bounds are proved using information-theoretic arguments. We refer the reader to \cite{ct06} for an excellent introduction to information theory.

For jointly distributed random variables $(A,B,C) \sim \mu$, we denote by $H_{\mu}(A)$ and $H_{\mu}(A \ | \ B)$ the Shannon entropy of $A$ and the conditional entropy of $A$ conditioned on $B$, respectively. Furthermore, we denote by $I_{\mu}(A \ : \ B)$ and $I_{\mu}(A \ : \ B \ | \ C)$ the mutual information between $A$ and $B$ and the conditional mutual information between $A$ and $B$ conditioned on $C$, respectively. 

We will use the following key properties of entropy and mutual information: (let $(A,B,C,D) \sim \mu$ be jointly distributed random variables)

\begin{enumerate} 
 \item Definition of conditional mutual information: $I_{\mu}(A \ : \ B \ | \ C) = H_{\mu}(A \ | \ C) - H_{\mu}(A \ | \ B,C)$,

 \item Definition of conditional mutual information via expectation: $$\displaystyle I_{\mu}(A \ : \ B \ | \ C) = \Exp_{c \gets C} I_{\mu}(A \ : \ B \ | \ C = c) \ ,$$

 \item Chain rule for mutual information: $I_{\mu}(A \ : \ B,C \ | \ D) = I_{\mu}(A \ : \ B \ | \ D) + I_{\mu}(A \ : \ C \ | \ B,D)$,

 \item Conditioning reduces entropy: $H_{\mu}(A \ | \ B) \le H_{\mu}(A)$,

 \item Independence of an event: Let $E$ be an event that is independent of $A,B,C$. Then,  $$I_{\mu}(A \ : \ B \ | \ C, E) = I_{\mu}(A \ : \ B \ | \ C) \ . $$
\end{enumerate}

Furthermore, we will also use (a consequence of) Fano's inequality as stated in Lemma~\ref{lem:fano} and a standard property about mutual information as stated in Lemma~\ref{lem:property-mi}.
\begin{lemma}[Fano's Inequality \cite{ct06}]\label{lem:fano}
Let $(X,Y, \tilde{X}) \sim \mu$ be jointly distributed random variables such that $X$ is a binary random variable that takes on values in $\{0,1\}$, $\tilde{X}$  is an estimator of $X$ that also takes on values in $\{0, 1\}$ such that $\tilde{X}$ is a random function of a variable $Y$. Then, we have:
 \begin{align*}
     H_{\mu}(X \ | \ Y) \le H_2(\Pr[\tilde{X} \neq X]) \ , 
 \end{align*}
 where $H_2$ denotes the binary entropy function.
\end{lemma}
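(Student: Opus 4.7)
The plan is to reduce the statement to the standard binary version of Fano's inequality by first bounding $H_\mu(X \mid \tilde{X})$ in terms of $H_2(p_e)$, where $p_e := \Pr[\tilde{X} \neq X]$, and then applying a data-processing step to move from conditioning on $\tilde{X}$ to the weaker conditioning on $Y$. The hypothesis that $\tilde{X}$ is a random function of $Y$ is exactly what provides the Markov structure $X - Y - \tilde{X}$ needed for data processing.

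First, I would introduce the binary error indicator $E := \mathbf{1}[\tilde{X} \neq X]$, noting that $H_\mu(E) = H_2(p_e)$. Then I would expand $H_\mu(E, X \mid \tilde{X})$ in two ways via the chain rule. One expansion gives
\begin{equation*}
H_\mu(E, X \mid \tilde{X}) \;=\; H_\mu(X \mid \tilde{X}) + H_\mu(E \mid X, \tilde{X}) \;=\; H_\mu(X \mid \tilde{X}),
\end{equation*}
since $E$ is a deterministic function of $(X, \tilde{X})$ and so contributes no extra entropy once both are fixed. The other expansion gives
\begin{equation*}
H_\mu(E, X \mid \tilde{X}) \;=\; H_\mu(E \mid \tilde{X}) + H_\mu(X \mid E, \tilde{X}).
\end{equation*}

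Next, I would bound each summand separately. By the "conditioning reduces entropy" property, $H_\mu(E \mid \tilde{X}) \le H_\mu(E) = H_2(p_e)$. For the remaining term, the key observation, and the reason the clean binary form of Fano holds without an extra $\log(|\mathcal{X}| - 1)$ term, is that $X$ is binary: conditioned on $E = 0$ we have $X = \tilde{X}$, and conditioned on $E = 1$ we have $X = 1 - \tilde{X}$. In either case $X$ is determined, so $H_\mu(X \mid E, \tilde{X}) = 0$. Combining the two expansions yields $H_\mu(X \mid \tilde{X}) \le H_2(p_e)$.

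Finally, because $\tilde{X}$ is produced as a random function of $Y$, the joint law factorizes as $\mu(x, y, \tilde{x}) = \mu(x, y)\cdot \mu(\tilde{x} \mid y)$, so $(X, Y, \tilde{X})$ forms a Markov chain $X - Y - \tilde{X}$. Applying the data-processing inequality gives $I_\mu(X : \tilde{X}) \le I_\mu(X : Y)$, which, after subtracting both sides from $H_\mu(X)$, is equivalent to $H_\mu(X \mid Y) \le H_\mu(X \mid \tilde{X})$. Chaining this with the previous bound delivers $H_\mu(X \mid Y) \le H_2(p_e)$. The only genuinely non-routine step is the very last one: one must be precise that the phrase "$\tilde{X}$ is a random function of $Y$" rigorously yields the Markov property $X - Y - \tilde{X}$, but this is immediate from the factorization of $\mu$; the rest is a mechanical chain-rule argument that specifically exploits the binary range of $X$.
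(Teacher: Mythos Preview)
Your proof is correct and is the standard derivation of the binary Fano inequality. The paper itself does not give a proof of this lemma; it simply cites \cite{ct06} and refers the reader there for the general version, so there is nothing substantive to compare against.
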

We refer the reader to \cite{ct06} for a more general version of Fano's Inequality.

\begin{lemma} \label{lem:property-mi}
    Let $(A,B,C,D) \sim \mu$ be jointly distributed random variables such that $A$ and $D$ are independent conditioned on $C$. Then:
    \begin{align*}
        I_{\mu}(A \ : \ B \ | \ C) \le I_{\mu}(A \ : \ B \ | \ C, D) \ . 
    \end{align*}
\end{lemma}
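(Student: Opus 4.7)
The plan is to prove this by a two-step application of the chain rule for mutual information, combined with the conditional independence hypothesis and non-negativity of mutual information. The key observation is that the quantity $I_{\mu}(A : B, D \mid C)$ can be expanded in two different ways using the chain rule, and comparing these two expansions will directly yield the desired inequality.

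First, I would apply the chain rule to expand $I_{\mu}(A : B, D \mid C)$ by peeling off $B$ first:
\begin{align*}
I_{\mu}(A : B, D \mid C) = I_{\mu}(A : B \mid C) + I_{\mu}(A : D \mid B, C).
\end{align*}
Then I would expand the same quantity by peeling off $D$ first:
\begin{align*}
I_{\mu}(A : B, D \mid C) = I_{\mu}(A : D \mid C) + I_{\mu}(A : B \mid C, D).
\end{align*}
Equating the two right-hand sides gives
\begin{align*}
I_{\mu}(A : B \mid C) + I_{\mu}(A : D \mid B, C) = I_{\mu}(A : D \mid C) + I_{\mu}(A : B \mid C, D).
\end{align*}

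Next, I would use the hypothesis that $A$ and $D$ are independent conditioned on $C$, which means $I_{\mu}(A : D \mid C) = 0$. Substituting this into the equation yields
\begin{align*}
I_{\mu}(A : B \mid C, D) = I_{\mu}(A : B \mid C) + I_{\mu}(A : D \mid B, C).
\end{align*}
Finally, since mutual information is always non-negative, the term $I_{\mu}(A : D \mid B, C) \ge 0$, which gives the desired inequality $I_{\mu}(A : B \mid C) \le I_{\mu}(A : B \mid C, D)$.

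I do not anticipate any significant obstacle in this proof, as it follows from elementary manipulations of the chain rule. The only subtle point to keep in mind is that although conditioning on $D$ can in general either increase or decrease mutual information, the assumed conditional independence of $A$ and $D$ given $C$ forces the term $I_{\mu}(A : D \mid C)$ to vanish, which is precisely what makes the inequality go in the claimed direction. No additional properties beyond the five listed above (the chain rule and non-negativity, the latter being implicit in the standard toolkit) are required.
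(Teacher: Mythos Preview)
Your proof is correct and is exactly the standard argument for this fact. The paper itself does not give a proof but simply cites \cite{akl16} (Claim~2.3); your two-way chain-rule expansion of $I_{\mu}(A:B,D\mid C)$ together with $I_{\mu}(A:D\mid C)=0$ and non-negativity of mutual information is precisely how that claim is proved there as well.
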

A proof of the previous statement can be found, for example, in \cite{akl16} (Claim~2.3.).

\section{An $\tilde{O}(n \sqrt{K})$ Space Algorithm for \textsf{Maximal Matching}} \label{sec:alg}

\subsection{Description of the Algorithm}

We start by briefly summarizing the main steps of our algorithm. We refer the reader to the detailed overview in Section~\ref{subsec:Techniques} for detailed intuition underlying the steps of our algorithm.

Let $G=(V,E)$ be the underlying graph that is revealed as a sequence $I$ of edge insertions and a sequence $D$ of deletions, that is, $E = I \setminus D$. 
In the first phase, our algorithm  builds a sequence of hierarchical matchings, say, $M_1, M_2, ..., M_{L}$, using only $I$, the edges inserted in the stream, ignoring all deletions, where $L = \sqrt{K}$. 
Since there are at most $K$ deletions, there exists an index $\ell \in [1..L]$ such that at most $\sqrt{K}$ edges in $M_{\ell}$ are deleted in the set $D$. The second phase of the algorithm is then focused on restoring the maximality of the matching $M_{\ell}$ after the deletions are applied. 

This is achieved as follows. 
We create a collection $V_0, V_1, V_2, \dots, V_R$ of subsets of $V$ where the set $V_i$ is a random subset of $V$ of size $n/\log^i n$, and $R = \Theta(\log n / \log \log n)$. For each vertex $w \in V_i$, our data structure $\D$ stores $\Theta(\log^{i+3} n)$ $\ell_0$-samplers, sampling uniformly at random from the edges incident on $w$. Now suppose vertex $u$ used to be matched in $M_{\ell}$ via an edge $(u,v)$, and this edge is deleted by $D$, then the repair process starts by opening the $\ell_0$-samplers of $u$. This results in one of the following three events with high probability: 

\begin{itemize}
    \item[(i)] either $u$ can be matched to a previously unmatched vertex, or 
    \item [(ii)] $u$ recovers all its incident edges in $G$, or 
    \item [(iii)] $u$ can be matched to an already matched vertex $v_1$ whose current partner $u_1$ is in the set $V_1$.
\end{itemize}
 Events (i) and (ii) are clearly (successful) termination events for $u$'s repair as we have either re-matched $u$, or we have all incident edges on $u$ available for us to try to rematch $u$ -- we, however, only do this after the repair process of all vertices has terminated. In case (iii), the repair process now continues from $u_1$, exploiting the fact that our data structure stores more $\ell_0$-samplers for each vertex in $V_1$. We will once again encounter one of the above three events, and in case of the third event, we will once again repair $u_1$ by unmatching one of its neighbors, and continue repair from a vertex, say $u_2 \in V_2$.  The process is guaranteed to terminate successfully when it reaches a vertex in $V_R$ since we store enough $\ell_0$-samplers to recover the entire neighborhood of each vertex in $V_R$. 
Finally, the algorithm replicates the data structure $\Theta(\sqrt{K})$ times to ensure the repair of every vertex in $M_{\ell}$ that is affected by deletions in $D$.

Our main algorithm is depicted as Algorithm~\ref{alg:maximal}, and it uses the \textsf{Hierachical-Greedy}() algorithm, which is stated as Algorithm~\ref{alg:hierachical-matching}, as a subroutine.

\begin{algorithm}
    \begin{algorithmic}[1]
        \REQUIRE Input stream of edge insertions (multi-edges allowed), integer parameter $L \ge 1$
        \STATE $M_1, M_2, \dots, M_{L} \gets \varnothing$
        \WHILE{stream not empty}
            \STATE Let $e$ be the next edge in the stream
            \STATE Let $\ell \ge 1$ be the smallest integer such that $M_{\ell} \cup \{e\}$ is a matching, if no such matching exists then let $\ell = -1$
            \IF{$\ell \neq -1$}
                \STATE $M_{\ell} \gets M_{\ell} \cup \{e\}$
            \ENDIF
        \ENDWHILE
        \RETURN $(M_1, M_2, \dots, M_{L})$
    \end{algorithmic}
    \caption{\textsf{Hierachical-Greedy($L$)}  \label{alg:hierachical-matching}}
\end{algorithm}

We now give some more implementation details of our main algorithm that are not covered by the pseudo-code.

In Line~\ref{line:apply-del} in the listing of Algorithm~\ref{alg:maximal}, we apply the deletions $D$ to the matchings of the hierachical matching. While this is straightforward if the substream $I$ of edge insertions does not contain any multiedges, some care needs to be taken if $I$ does contain multiedges. We apply the deletions in order as they arrived in the stream and bottom-up, i.e., we first apply deletions to $M_1$, then $M_2$, and so on. This is to ensure that deletions are matched to the relevant insertions.

In Line~\ref{line:entire-neighborhood} of our main algorithm, Algorithm~\ref{alg:maximal}, we evaluate whether the entire neighborhood of a vertex is contained in a set of $\ell_0$-samplers. This condition can, for example, be checked by comparing the number of different incident edges produced by the $\ell_0$-samplers to the degree of the vertex. Observe that, in insertion-deletion streams, the degree of a vertex can easily be computed by using a counter in $O(\log n)$ space.

\begin{algorithm}[h!]
    \begin{algorithmic}[1]
        \REQUIRE Stream of edge insertions and at most $K$ edge deletions making up a graph $G=(V, E)$; large constants $C,C'$  with $C > C'$ suitably chosen \vspace{0.15cm}
        \STATE $R \gets \frac{\log(n) + \log(C')}{\log \log n} - 2$ \COMMENT{Number of vertex levels}
        \STATE Let $V_0 = V$, and for $i \in [R]$, let $V_i \subseteq V$ be a random subset of size $C \cdot \frac{n}{\log^i(n)}$\vspace{0.3cm}
        
        \STATE \textbf{Input Stream Processing: Run in parallel}
        \STATE $\quad$ 1. $(M_1, \dots, M_{\sqrt{K}}) \gets $ \textsf{Hierachical-Greedy}($\sqrt{K}$) on substream of edge insertions
        \STATE $\quad$ 2. Store all edge deletions observed in the stream in variable $D$
        \STATE $\quad$ 3. For each $i \in [R] \cup \{0 \}$ and every $v \in V_i$, run $2 \cdot \sqrt{K}\log^{i+3}(n)$ $\ell_0$-samplers on the substream of edges incident on $v$; each sampler uses failure parameter $\delta = \frac{1}{n^4}$ \vspace{0.15cm}
        \STATE \textbf{Notation:} For any $i \in [R] \cup \{0 \}$, $v \in V_i$, and $j \in [\sqrt{K}]$ denote by $N_{i,j}(v)$ the $j$th $\frac{1}{2 \cdot \sqrt{K}}$-fraction of $v$'s level $i$ $\ell_0$-samplers \vspace{0.3cm}        
        \STATE \textbf{Post-processing:}
        \STATE Let $(M_1', M_2', \dots, M_L')$ be the matchings $(M_1, M_2, \dots, M_L)$ with the deletions $D$ applied \label{line:apply-del}
        \STATE Let $\ell$ be such that $|M_{\ell} \setminus M_{\ell}'| \le \sqrt{K}$, and let $V(M_{\ell} \setminus M_{\ell}') = \{u_1, u_2, \dots \}$ 
        \STATE $M \gets M_{\ell}'$ \label{line:del} \COMMENT{repair $M$ back to a maximal matching}
        \FOR[fix $u_j$]{$j \gets 1 \dots |V(M_{\ell} \setminus M_{\ell}')|$} \label{line:loop-repair}
            \IF{$u_j \in V(M)$} \label{line:accidental-fix}
                \STATE \textbf{continue} \COMMENT{$u_j$ was matched while fixing some $u_b$ with  $b < j$}
            \ENDIF
            \STATE $u \gets u_j$            
            \FOR[Iterate through the vertex levels]{$i = 0 \dots R$} \label{line:loop}
            \IF{$N_{i,j}(u)$ contains entire neighborhood of $u$} \label{line:entire-neighborhood}
                \STATE \textbf{continue} \COMMENT{$u$ will be dealt with in Line~\ref{line:final-fix}}
            \ELSIF{$N_{i,j}(u)$ contains a vertex $v$ that is not matched in $M$} 
                \STATE $M \gets M \cup \{(u,v) \}$
            \ELSE
                \STATE Let $v \in N_{i,j}(u)$ be such that its mate $u'$ in $M$ is such that $u' \in V_{i+1}$ (Lemma~\ref{lem:next-level} shows that such a vertex exists w.h.p.)
                \STATE $M \gets (M - (u', v)) \cup \{(u,v) \}$
                \STATE $u \gets u'$
            \ENDIF
        \ENDFOR \label{line:end-loop-repair}
        \STATE Greedily attempt to add all edges in $M_1', \dots, M_{\ell - 1}'$ as well as all edges recovered by the $\ell_0$-samplers to $M$ if possible \label{line:final-fix}
    \ENDFOR
    \RETURN $M$
    \end{algorithmic}
    \caption{Bounded-deletion Streaming Algorithm for \textsf{Maximal Matching}\label{alg:maximal}}
\end{algorithm}

\subsection{Analysis}
Before analyzing our algorithm, we point out that the description of the algorithm uses large constants $C,C'$ with $C > C'$ that are appropriately chosen. We will see that the analysis only imposes weak constraints on $C$ and $C'$ and that such constants are easy to pick.
We also assume that all our $\ell_0$-samplers succeed. Since we run the $\ell_0$-samplers with error parameter $\delta = \frac{1}{n^4}$, and there are less than $n^2$ such samplers, by the union bound, this is a high probability event.

The first key ingredient of our analysis is the fact that, in the absence of deletions, every matching $M_\ell$, $1 \le \ell \le L$, produced by \textsf{Hierachical-Greedy}() can easily be extended to a globally maximal matching by greedily adding edges of the matchings $M_1, \dots, M_{\ell-1}$ to it if possible.  
The next lemma captures this idea and combines it with a key insight that allows us to fix edge deletions: We do not need to immediately rematch vertices incident to a deleted edge as long as we know their entire neighborhoods. This lemma is key for establishing our algorithm's correctness.

\begin{lemma}\label{lem:key-1}
Let $G(V,E)$ be a graph where $E = I \setminus D$ such that $I$ is a multiset of edge insertions and $D \subseteq I$ is a multiset of edge deletions. Let $(M_1, M_2, \dots, M_{L})$ be the output of \textsf{Hierachical-Greedy}$(L)$, i.e., a sequence of hierarchical matchings constructed using the edges in $I$, processed in an arbitrary order. Let $F \subseteq E$ be a subset of edges. Then, for any $\ell \in [L]$, let $M$ be any matching such that, for every vertex $v$ matched in $M_{\ell}$, either (i) $v$ is also matched in $M$, or (ii) the entire neighborhood of $v$ is known, i.e., $N(v) \subseteq F$. 
Then $M$ can be extended to become a maximal matching of the graph $G(V,E)$ by simply greedily adding to it edges in $F \cup \left( (M_1 \cup M_2 \cup ... M_{\ell-1}) \setminus D \right)$ if possible.
\end{lemma}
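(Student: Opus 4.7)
The plan is to proceed by contradiction. Let $M'$ denote the result of greedily extending $M$ with edges from $F \cup \bigl((M_1 \cup \dots \cup M_{\ell-1}) \setminus D\bigr)$. Suppose $M'$ is not maximal in $G(V,E)$; then there exists a witness edge $e = (u,v) \in E$ with both endpoints unmatched in $M'$. The aim is to derive a contradiction from the existence of $e$ by considering where $e$ was placed by $\textsf{Hierachical-Greedy}(L)$ and using the hypothesis on $M$.

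The first key observation is structural. Since $\textsf{Hierachical-Greedy}$ only adds edges to matchings (never removes), if an edge $e'$ is blocked at level $k$ when processed, i.e., some endpoint of $e'$ is already matched in $M_k$ at that moment, then that endpoint remains matched in $M_k$ at the end of the stream. Consequently, for the witness edge $e = (u,v) \in E \subseteq I$, exactly one of the following holds: (i) $e \in M_k$ for some $k \in [L]$, or (ii) $e$ was discarded, meaning every level $M_1, \dots, M_L$ had at least one endpoint of $e$ matched at the time of processing (and still does at the end). (The multiset subtlety when $e$ is inserted multiple times is handled by applying this observation to the surviving copy tracked by the bottom-up deletion rule; nothing in the argument below is sensitive to this.)

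I would then split into two cases. \textbf{Case A:} $e \in M_k$ with $k < \ell$. Then $e \in M_1 \cup \dots \cup M_{\ell-1}$, and because $e \in E$ we have $e \notin D$, so $e \in (M_1 \cup \dots \cup M_{\ell-1}) \setminus D$ and the greedy extension attempted to add $e$ to $M$. Since both $u$ and $v$ are assumed unmatched in $M'$, neither of them was matched when greedy examined $e$ (any matching event persists in $M'$), so greedy must have added $e$, matching both $u$ and $v$ in $M'$ — a contradiction. \textbf{Case B:} either $e \in M_\ell$, or $e \in M_k$ with $k > \ell$, or $e$ was discarded. By the structural observation, in all three sub-cases at least one endpoint of $e$, say $u$, is matched in $M_\ell$ at the end of processing $I$. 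Applying the hypothesis to $u$: either $u$ is matched in $M \subseteq M'$, contradicting that $u$ is unmatched in $M'$; or the entire neighborhood of $u$ in $G$ lies in $F$, which implies $e = (u,v) \in F$. In the latter sub-case, greedy attempted to add $e$, and by exactly the same persistence argument as in Case A, $e$ would have been added to $M'$, again contradicting that $u, v$ are unmatched in $M'$.

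Since every case yields a contradiction, no such witness edge $e$ exists, and $M'$ is maximal in $G(V,E)$. The main conceptual step is the structural observation that a blocked endpoint stays blocked, which lets every potentially unmatched witness edge be attributed either to level $<\ell$ (handled by greedy insertion of $(M_1 \cup \dots \cup M_{\ell-1}) \setminus D$) or to a vertex matched in $M_\ell$ (handled by the hypothesis on $M$ via $F$). The only real subtlety worth spelling out carefully is disambiguating ``greedy did not add $e$'' into ``some endpoint was already matched'' versus ``$e$ was never offered,'' both of which collapse to the same contradiction because $M \subseteq M'$ and $M'$ only grows during the extension.
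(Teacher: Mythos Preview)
Your proof is correct and follows essentially the same approach as the paper: proof by contradiction via a witness edge, split according to whether an endpoint lies in $V(M_\ell)$ (your Case~B) or not (your Case~A). The paper organizes it as first establishing $V(M_\ell)\cap\{u,v\}=\emptyset$ and then arguing all copies of $(u,v)$ land in levels $<\ell$, whereas you case-split directly on the level of $e$; these are the same argument sliced differently. One small point: your treatment of the multiset case is less explicit than the paper's---the paper counts that all $\beta$ insertions of $(u,v)$ must sit in $M_1,\dots,M_{\ell-1}$ once $V(M_\ell)\cap\{u,v\}=\emptyset$, whereas you defer to ``the surviving copy tracked by the bottom-up deletion rule''; your parenthetical is fine but the paper's counting argument is a cleaner way to dispatch this.
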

\begin{proof}
    We prove this lemma by contradiction. To this end, assume that there exists an edge $(u,v) \in E \setminus M$ such that $M \cup \{ (u, v) \}$ is a matching. 
    
    We first argue by contradiction that $V(M_{\ell}) \cap \{u,v\} = \varnothing$. Indeed, suppose that this was not the case and, w.l.o.g., assume that $u \in V(M_{\ell})$. By the statement of the lemma, then either $u$ is matched in $M$ or the entire neighborhood of $u$ is known, i.e., $N(u) \subseteq F$, which implies that $(u, v) \in F$. In the first case, we immediately arrive at a contradiction since the edge $(u,v)$ could not be added to $M$ since $u$ is already matched. In the second case, observe that we attempted to greedily add the edges of $F$ to $M$, in particular, we already attempted to add the edge $(u,v)$ to $M$, which also yields a contradiction.

    Assume therefore that $V(M_{\ell}) \cap \{u,v\} = \varnothing$. This, however, implies that, when the edge $(u,v)$ arrived in the stream, it was not included in $M_{\ell}$ despite both endpoints being free in $M_{\ell}$. Suppose the edge $(u,v)$ occurs $\beta$ times in the stream $I$. It must then be the case that {\em all} $\beta$ copies of this edge in $I$ must have been added to one or more of the matchings $M_1, \dots, M_{\ell-1}$. On the other hand, since edge $(u,v) \in E \setminus M$, it means that the numbers of times $(u,v)$ is deleted in $D$ is strictly less than $\beta$. So at least one copy of the edge $(u,v)$ in $M_1, \dots, M_{\ell-1}$ remains undeleted, after we remove edges in $D$. Since we attempted to add the surviving edges of $M_1, \dots, M_{\ell-1}$ to $M$, we also arrive at a contradiction. 

   It follows then that the matching $M$ is indeed maximal in $G(V,E)$.

\end{proof}

The second key ingredient of our analysis is a {\em progress lemma} that shows that the fixing process yields the desired result. In more detail, we show that, in iteration $i$ of the loop in Line~\ref{line:loop}, when fixing any vertex $u_j \in (V(M_\ell) \cap V_i)$ that is currently unmatched but was matched in $M_{\ell}$, then we either (i) recover all of $u_j$'s neighbors, (ii) we are able to match $u_j$ directly to a yet unmatched vertex, or (iii) we can match $u_j$ to an already matched vertex $v$ such that $v$'s mate $u'$ in the current matching is contained in level $i+1$, i.e., $u' \in V_{i+1}$.

\begin{lemma}[Progress Lemma]\label{lem:next-level} \label{lem:progress}
 Let $M$ be any matching in $G=(V, E)$, where $G$ is the final graph described by the input stream, and let $u \in V_i \setminus V(M)$ be an unmatched vertex contained in level $i$. Then, for any $j$, with high probability, at least one of the following assertions is true:
 \begin{enumerate}
     \item $N_{i,j}(u)$ contains the entire neighborhood of $u$; \label{item:one}
     \item $N_{i,j}(u)$ contains a vertex $v$ such that $M \cup \{(u,v) \}$ is a matching; \label{item:two}
     \item $N_{i,j}(u)$ contains a vertex $v$ that is matched in $M$ to a vertex $u'$ such that $u' \in V_{i+1}$. \label{item:three}
 \end{enumerate}
\end{lemma}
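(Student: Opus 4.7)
\textbf{Proof plan for Lemma~\ref{lem:progress}.} The plan is to dispose of assertion~\ref{item:two} by assumption (if it holds we are done) and then case-split on the degree $d := |N(u)|$ in the final graph $G$ to argue that either assertion~\ref{item:one} or assertion~\ref{item:three} holds with high probability. Recall that $N_{i,j}(u)$ consists of $s := \log^{i+3}(n)$ independent $\ell_0$-samplers, each of which (conditional on success, which we already assume throughout) returns a uniformly random neighbor of $u$ in $G$, and that $V_{i+1}$ is chosen as a uniformly random subset of $V$ of size $C \cdot n / \log^{i+1}(n)$. Assuming assertion~\ref{item:two} fails, every sampled neighbor is matched in $M$ to a unique mate, so the goal is either to recover all of $N(u)$ or to show that at least one mate lies in $V_{i+1}$.

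\emph{Small-degree case.} Suppose $d \le \log^{i+2}(n)/(c+1)$ for a suitably large constant $c$. Then for any fixed $v \in N(u)$, the probability that none of the $s$ samples equals $v$ is at most $(1-1/d)^s \le e^{-s/d} \le e^{-(c+1)\log n} = n^{-(c+1)}$. A union bound over the at most $d \le n$ neighbors shows that $N_{i,j}(u)$ contains the entire neighborhood of $u$ with probability at least $1 - n^{-c}$, giving assertion~\ref{item:one}.

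\emph{Large-degree case.} Suppose instead that $d > \log^{i+2}(n)/(c+1)$. I first argue that the number $X$ of \emph{distinct} neighbors produced by the $s$ samples is at least $T := \Omega(\log^{i+2}(n))$ with high probability: since $E[X] = d(1-(1-1/d)^s) \ge \tfrac{1}{2}\min(s,d) \ge T$ for our choice of parameters, and $X$ is a self-bounded function of $s$ independent samples, a bounded-differences inequality (or an exchangeable-pair/Chernoff argument using negative association of the indicators that distinct coupons have been collected) gives $X \ge T/2$ except with probability $n^{-\Omega(1)}$. Now assume both assertions~\ref{item:one} and~\ref{item:two} fail. Let $v_1,\dots,v_X$ be the distinct sampled neighbors, all matched in $M$, and let $u'_k := M(v_k)$. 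The $u'_k$ are $X$ distinct vertices, and the event $u'_k \in V_{i+1}$ is the event that a fixed vertex lies in a uniformly random subset of size $C n/\log^{i+1}(n)$, which holds with probability $C/\log^{i+1}(n)$ per vertex. By negative association of indicators of inclusion in a uniformly random size-$k$ subset, the probability that no $u'_k$ lies in $V_{i+1}$ is at most $(1 - C/\log^{i+1}(n))^X \le \exp\!\left(-C \cdot T/(2\log^{i+1}(n))\right) = \exp(-\Omega(C \log n))$, which is at most $n^{-c}$ for $C$ chosen sufficiently large relative to $c$ and $C'$. Hence assertion~\ref{item:three} holds with high probability.

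\emph{Main obstacle.} The real technical care is needed in two places. First, the concentration of the distinct-samples count $X$ must be sharp enough to survive the subsequent union bounds that Theorem~\ref{thm:ub} will take over all $O(\sqrt{K})$ choices of $j$, all $R = O(\log n/\log\log n)$ levels, and all $O(n)$ vertices $u$ being repaired; the argument above yields polynomial-in-$n$ failure probability, which is more than enough. Second, one must ensure that $V_{i+1}$ is effectively independent of the quantities conditioned on — namely the matching $M$, the neighborhood $N(u)$, and the distinct sampled neighbors. This is immediate because the random choices of $V_{i+1}$ are used by the algorithm only to decide which vertices receive the extra $\ell_0$-samplers at level $i+1$, and the $N_{i,j}(u)$ samplers and the matching $M$ at this point are functions only of $V_0,\dots,V_i$ and of the $\ell_0$-samplers at levels $\le i$. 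Thus $V_{i+1}$ is independent of $(M, v_1,\dots,v_X, u'_1,\dots,u'_X)$ conditional on $V_0,\dots,V_i$, which justifies treating the $u'_k$'s as fixed vertices against a fresh random subset when invoking the tail bound, completing the proof.
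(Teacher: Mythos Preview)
Your argument follows the same skeleton as the paper's: case-split on the degree of $u$; when the degree is small a coupon-collector computation gives assertion~\ref{item:one} (the paper packages this as Lemma~\ref{lem:technical-1}); when the degree is large and assertion~\ref{item:two} fails, the distinct sampled neighbors are all matched and one of their mates must land in the random set $V_{i+1}$. You are in fact more careful than the paper on one point: the paper simply asserts that the set $U$ of mates satisfies $|U|\ge\log^{i+2}(n)/C'$ once the degree is large, whereas you supply the concentration argument for the number of distinct $\ell_0$-samples that actually justifies this.

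Your ``main obstacle'' paragraph, however, contains a mistaken claim about the algorithm. You write that the matching $M$ at this point is a function only of $V_0,\dots,V_i$ and the samplers at levels $\le i$. This is false for $j>1$: by the time Algorithm~\ref{alg:maximal} repairs $u_j$ at level $i$, the current $M$ has already been modified by the \emph{full} repair chains of $u_1,\dots,u_{j-1}$, and each such chain explicitly tests membership in $V_{i'+1}$ at \emph{every} level $i'$ it visits, including levels $\ge i$. So in the algorithm $M$ can depend on $V_{i+1}$. For the lemma as stated---``let $M$ be any matching''---the natural reading (and the one the paper's own proof adopts) is that $M$ is fixed independently of the algorithm's randomness, and under that reading the independence of the mates $u'_k$ from $V_{i+1}$ is immediate and your tail bound goes through. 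The adaptivity concern you raise really belongs to the application of the lemma inside the proof of Theorem~\ref{thm:ub}; the paper does not address it there either, but your proposed resolution does not work as written.
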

\begin{proof}We first recall that $N_{i,j}(u)$ consists of $\log^{i+3}(n)$ $\ell_0$-samplers. 

    Suppose that Items~\ref{item:one} and \ref{item:two} are false. Since Item~\ref{item:one} is false, the degree of $u$ cannot be too small since otherwise the $\ell_0$-samplers would have picked up every single edge incident to $u$. We prove in  Lemma~\ref{lem:technical-1} that, w.h.p., the degree of $u$ is at least $\log^{i+2}(n)/C'$. Then, since Item~\ref{item:two} is false, all vertices produced by the $\ell_0$-samplers in $N_{i,j}(u)$ are matched in $M$. Denote by $U$ their mates in $M$. Then, $|U| \ge \log^{i+2}(n)/C'$. Then, the probability that none of the vertices in $U$ are contained in $V_{i+1}$ is:
    $$\frac{{n - |U| \choose |V_{i+1}|}}{    
    {{n \choose |V_{i+1}|}}}  \le \left(1 - \frac{|V_{i+1}|}{n} \right)^{|U|} \le \exp(-|V_{i+1}| \cdot |U| / n) = \exp(-\frac{C}{C'} \log n) \le \frac{1}{n^5} \ ,$$
where we used Lemma~\ref{lem:tech} stated in the appendix, then used the inequality $1+x \le \exp(x)$, and then assumed that $C$ and $C'$ are picked such that $\frac{C}{C'}$ is large enough. Hence, such a vertex exists with high probability, which completes the proof.

\end{proof}

\begin{lemma} \label{lem:technical-1}
 Let $v \in V_i$ be any vertex. Then, if $\deg(v) \le \log^{i+2}(n)/C'$ then $N(v) \subseteq N_{i,j}(v)$, for every $j$, with high probability.
\end{lemma}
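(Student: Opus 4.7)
The plan is to treat this as a standard coupon-collector style argument. Since the preceding analysis already assumes that all $\ell_0$-samplers succeed (a high-probability event by the choice of failure parameter $\delta = 1/n^4$ and union bound over the at most $n^2$ samplers), each of the $\log^{i+3}(n)$ samplers composing $N_{i,j}(v)$ can be treated as returning a uniformly random edge incident on $v$, independently of the others.

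First I would set $d := \deg(v)$ and $m := \log^{i+3}(n)$, and fix an arbitrary edge $e$ incident on $v$. The probability that none of the $m$ samplers returns $e$ is exactly $(1-1/d)^m$. Using the hypothesis $d \le \log^{i+2}(n)/C'$, we get
\begin{align*}
(1 - 1/d)^m \;\le\; \exp(-m/d) \;\le\; \exp\!\bigl(-C' \log n\bigr) \;=\; n^{-C'}.
\end{align*}

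Next I would take a union bound over the at most $d \le n$ edges incident on $v$, obtaining a bound of $n^{1-C'}$ on the probability that some edge incident on $v$ is missed by all of the samplers in $N_{i,j}(v)$. Choosing $C'$ to be a sufficiently large constant (the analysis imposes only this weak lower-bound constraint, compatible with the earlier requirement that $C/C'$ be large) makes this a polynomially small failure probability, so $N(v) \subseteq N_{i,j}(v)$ with high probability, as required.

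There is essentially no obstacle here beyond bookkeeping: the only subtlety worth mentioning is that the statement must hold simultaneously for every $j \in [\sqrt{K}]$ and, in conjunction with how this lemma is invoked inside Lemma~\ref{lem:progress}, for every relevant vertex $v$. Both are handled by a further union bound; since $\sqrt{K} \le n$ and $|V| = n$, we pay at most an extra $n^2$ factor, which is absorbed by choosing $C'$ large enough to make the per-vertex failure probability $n^{-\Omega(C')}$ overwhelm this loss.
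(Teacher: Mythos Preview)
Your proposal is correct and follows essentially the same argument as the paper: fix a neighbor (equivalently, an incident edge), bound the probability that all $\log^{i+3}(n)$ samplers miss it by $\exp(-C'\log n)$ using the degree hypothesis, then union-bound over neighbors and over $j$. The only cosmetic difference is that the paper phrases the per-sampler hit probability as $\ge C'/\log^{i+2}(n)$ rather than $1/d$, and it does not explicitly fold in the union bound over vertices (that is left to the callers), but the substance is identical.
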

\begin{proof}
    First, observe  that $N_{i,j}(v)$ consists of $\log^{i+3}(n)$ $\ell_0$-samplers for each vertex $v \in V_i$. Let $u \in N(v)$ be a vertex. Then, assuming that none of the $\ell_0$-samplers fail, the probability that the outcome of an $\ell_0$-sampler for $v$ is $u$ is at least $\frac{1}{\deg(v)} \ge \frac{C'}{\log^{i+2}(n)}$. Since all samplers operate independently, the probability that none of the samplers produce $u$ is at most:
$$\left( 1 -  \frac{C'}{\log^{i+2} n} \right)^{\log^{i+3}  n} \le \exp \left(- \frac{C'}{\log^{i+2} n} \cdot  \log^{i+3} n \right) = \exp \left( - C' \log n \right) \le \frac{1}{n^{10}} \ ,$$
using the inequality $1+x \le \exp(x)$ and by picking a sufficiently large value for $C'$. By a union bound over the vertices in $N(v)$, we obtain that all vertices in $N(v)$ are contained in the $\ell_0$-samplers with high probability. Last, by a union bound over all values of $j$, the result follows.
\end{proof}

We observe that the previous lemma implies the important property that, for every vertex $v \in V_{R}$ with $R = \frac{\log(n) + \log(C')}{\log \log n} - 2$ as in the algorithm, i.e., a vertex contained in the last level $R$, the entire neighborhood of $v$ is contained in $N_{R,j}(v)$ since
$\log^{R+2}(n) / C' \ge n \ . $

We are now ready to prove our main result, i.e., Theorem~\ref{thm:ub}.

\newcounter{thmsaved}
\setcounter{thmsaved}{\value{theorem}}
\setcounter{theorem}{\value{counterUB}}
\addtocounter{theorem}{-1}

\begin{theorem}
 Algorithm~\ref{alg:maximal} is a $\tilde{O}(n \cdot \sqrt{K})$ space streaming algorithm for \textsf{Maximal Matching}, where $K$ is an upper bound on the number of deletions in the stream, and succeeds with high probability.
\end{theorem}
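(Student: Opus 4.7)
The plan is to decompose the proof into a space analysis and a correctness analysis, leveraging the two workhorse results already established: the extension Lemma~\ref{lem:key-1} and the progress Lemma~\ref{lem:progress}.

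For the space bound, I would count each of the three parallel streaming components. The $\sqrt{K}$ hierarchical matchings store at most $n$ edges each, contributing $\tilde{O}(n\sqrt{K})$ bits, and the deletion log $D$ holds at most $K \le n\sqrt{K}$ edges. The dominant term is the $\ell_0$-sampler block: for each level $i$, the set $V_i$ contains $Cn/\log^i n$ vertices, each carrying $2\sqrt{K}\log^{i+3}(n)$ samplers of $O(\log^3 n)$ bits apiece (invoking the Jowhari et al.\ guarantee with $\delta = n^{-4}$). The product per level is $\tilde{O}(n\sqrt{K})$ independently of $i$, so summing over the $R = O(\log n/\log\log n)$ levels preserves $\tilde{O}(n\sqrt{K})$.

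For correctness, a pigeonhole argument over the $\sqrt{K}$ matchings and $\le K$ deletions produces an index $\ell$ with $|M_\ell \setminus M_\ell'| \le \sqrt{K}$, so at most $\sqrt{K}$ vertices $u_j$ need repair, matching the number of independent sampler batches provided by the second index $j$. Writing $F$ for the set of edges recovered by all $\ell_0$-samplers, the goal is to establish, at the end of the loop in Lines~\ref{line:loop-repair}--\ref{line:end-loop-repair}, the invariant that every $v \in V(M_\ell)$ either is matched in the current matching $M$ or satisfies $N(v) \subseteq F$. Once proved, Lemma~\ref{lem:key-1} applied with this $\ell$ and $F$ immediately implies that the greedy pass of Line~\ref{line:final-fix} extends $M$ into a maximal matching of $G(V,E)$. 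To build this invariant I would analyze each chain $u_j = u^{(0)} \in V_0, u^{(1)} \in V_1, \dots, u^{(t)} \in V_t$ generated by the inner loop: repeated applications of case~3 of Lemma~\ref{lem:progress} match $u^{(0)}, \dots, u^{(t-1)}$ in turn, and the chain terminates at level $t \le R$ via case~1 (neighborhood $N(u^{(t)}) \subseteq F$) or case~2 (matched to a free vertex). Termination by level $R$ is forced by Lemma~\ref{lem:technical-1} together with the choice $R = (\log n + \log C')/\log\log n - 2$, and a union bound over the $O(\sqrt{K}\cdot R)$ invocations of the progress lemma plus the $\poly(n)$ sampler calls makes every step succeed with high probability.

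The step I expect to be the main obstacle is the bookkeeping for interacting chains: a later chain may unmatch a vertex that a prior chain placed into $M$, or may terminate at some vertex in $V(M_\ell')$ that was not originally scheduled for repair, and the skip at Line~\ref{line:accidental-fix} must be shown never to discard a vertex that still needs an unknown-neighborhood certificate. I would address this by framing the induction over loop iterations rather than vertices, maintaining the stronger statement that after iteration $j$ the only vertex of $V(M_\ell)$ possibly left unmatched by the current chain is that chain's terminal vertex, and that this vertex either carries an $N(\cdot)\subseteq F$ certificate or will be matched by a later chain. Independence of the sampler batches $N_{i,j}$ across $j$ guarantees that the progress lemma can be invoked for each step in each chain without accumulating correlated failures, closing the argument.
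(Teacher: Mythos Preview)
Your proposal is correct and follows essentially the same approach as the paper's proof: the space bound is obtained by summing the $\ell_0$-sampler cost over levels, and correctness is established by showing via Lemmas~\ref{lem:progress} and~\ref{lem:technical-1} that the repair loop leaves every vertex of $V(M_\ell)$ either matched in $M$ or with its full neighborhood recovered, after which Lemma~\ref{lem:key-1} applied in Line~\ref{line:final-fix} yields maximality. The cross-chain interference you flag as the main obstacle is handled in the paper exactly as you propose---by an invariant over outer-loop iterations---because any vertex unmatched in case~3 immediately becomes the new $u$, is guaranteed to lie in $V_{i+1}$, and is therefore fixed before that iteration of the outer loop terminates (with termination forced at level $R$); one minor slip is that $|V(M_\ell \setminus M_\ell')| \le 2\sqrt{K}$ rather than $\sqrt{K}$, which is why the algorithm allocates $2\sqrt{K}$ sampler batches.
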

\setcounter{theorem}{\value{thmsaved}} 

\begin{proof} 
    \noindent We first address the space complexity of the algorithm and then establish its correctness. 

    \vspace{0.15cm}
    \noindent \textbf{Space Complexity.} 
    The algorithm stores the $\sqrt{K}$ matchings $M_1, \dots, M_{\sqrt{K}}$, which requires space $\tilde{O}(n \sqrt{K})$. In addition, the algorithm stores various $\ell_0$-samplers. The number of these samplers is bounded by 
    $$\sum_{i=0}^R |V_i| \cdot 2 \sqrt{K} \log^{i+3}(n) \le  \sum_{i=0}^R C \cdot \frac{n}{\log^i n} \cdot 2 \sqrt{K} \log^{i+3}(n) = n \cdot \sqrt{K} \cdot \log^3 n \cdot R = \tilde{O}(n \sqrt{K}) \ , $$
    and, since each $\ell_0$-sampler uses space $\tilde{O}(1)$, the space bound follows.
    
    \vspace{0.15cm}
    \noindent \textbf{Correctness.}
    To establish correctness, we will argue that after the loop in Line~\ref{line:loop-repair}, which ends in Line~\ref{line:end-loop-repair}, but before Line~\ref{line:final-fix} is executed, the matching $M$ together with the set of $\ell_0$-samplers fulfill the premises of Lemma~\ref{lem:key-1}. Invoking the lemma then establishes that executing Line~\ref{line:final-fix} turns $M$ into a globally maximum matching, which  completes the proof.

    To argue that $M$ fulfills the premises of Lemma~\ref{lem:key-1} after Line~\ref{line:end-loop-repair} (but before Line~\ref{line:final-fix}), we need to ensure that, for every vertex $u \in V(M_\ell)$, either $u$ is matched in $M$ or the entire neighborhood of $u$ is contained in our set of $\ell_0$-samplers. 
    In the following, we say that a vertex $u \in V(M_{\ell})$ fulfills property $P$ if it is either matched in $M$ or its entire neighborhood is contained in our $\ell_0$-samplers.
    
    Observe that when entering the loop in Line~\ref{line:loop-repair}, the vertices $V(M_{\ell} \setminus  M_{\ell}')$ are exactly those vertices that violate property $P$. We will now argue that each iteration of this for-loop fixes one of these vertices and, in particular, it does not introduce any new vertices that may violate this property. This then completes the proof.

    Consider thus the iteration $j$ that fixes vertex $u_j$. First, it may happen that $u_j$ was matched when fixing a vertex $u_{j'}$, for some $j' < j$. This is checked in Line~\ref{line:accidental-fix}, and if this happens then we are done. Otherwise, we enter the loop in Line~\ref{line:loop}. The vertex $u$ is initialized as the vertex $u_j$ that we attempt to fix. The loop satisfies the invariant that, in iteration $i$, we are guaranteed that $u \in V_i$. This is clearly the case for the first iteration $i=0$ since $V_0 = V$. To see that this holds throughout, we will argue that an iteration of the loop always establishes property $P$ for the current vertex $u$. Fixing vertex $u$ may however come at the cost of introducing a new violation of property $P$ for a previously matched vertex $u'$. This vertex $u'$, however, is then necessarily  contained in the vertex set $V_{i+1}$. Since at the end of the loop, we set $u \gets u'$, we therefore see that in the subsequent iteration $i+1$, the vertex $u$ is now contained in $V_{i+1}$, as desired. 

    Now, to see that $u$ is always fixed, we invoke our progress lemma, Lemma~\ref{lem:progress}, which states that either the entire neighborhood of $u$ is contained in our $\ell_0$-samplers (case 1) or $u$ is rematched (cases 2 and 3), which implies that $u$ now satisfies property $P$. Case 3 matches $u$ to a previously matched vertex $v$ whose mate $u'$ becomes unmatched. However, as proved in Lemma~\ref{lem:progress}, $u'$ is then necessarily contained in $V_{i+1}$. 

    Last, to see that in the final iteration of the loop in Line~\ref{line:loop}, no new vertex that violates property $P$ is introduced, we observe that the last level $V_R$ is such that the entire neighborhood of every vertex in $V_R$ is stored in the $\ell_0$-samplers $N_{R,j}(u)$, for every $j$, which is a consequence of Lemma~\ref{lem:technical-1}.

    Thus, overall, we have established that each iteration of the main for loop fixes a vertex and does not introduce any new vertices that violate property $P$. Lemma~\ref{lem:key-1} thus establishes that Line~\ref{line:final-fix} turns $M$ into a globally maximal matching, which establishes that the output matching is maximal and completes the proof.

\end{proof}


\section{Space Lower Bound for \textsf{Maximal Matching}} \label{sec:lb}
In this section, we give our space lower bound for bounded-deletion streaming algorithms for \textsf{Maximal Matching}. To this end, in Subsection~\ref{sec:embedded-augmented-index} we define the \textsf{Embedded-Augmented-Index} problem, an extension of the \textsf{Augmented-Index} problem, and lower-bound its  communication complexity. Then, in Subsection~\ref{sec:reduction}, we show that a bounded-deletion streaming algorithm for \textsf{Maximal Matching} can be used to solve \textsf{Embedded-Augmented-Index}, which yields our main lower bound result.

\subsection{The \textsf{Embedded-Augmented-Index} Problem} \label{sec:embedded-augmented-index}
The $\textsf{Embedded-Augmented-Index}_{s,t}$ problem is a one-way two-party communication problem and is parameterized by two integers $s$ and $t$. Alice holds $s$ bitstrings $X := X_1, \dots, X_s \in \{0,1\}^{t}$. Bob holds two indices $I \in [s]$ and $J \in [t]$ as well as the suffix $X_I[J+1, \dots, t]$. Alice sends a single message to Bob, and, upon receipt, Bob needs to output the bit $X_I[J]$. Protocols can be randomized and need to be correct on every input with probability at least $\frac{1}{2} + \epsilon$, for some small constant $\epsilon > 0$.

Let $\mu$ denote the uniform input distribution, where all variables $(X, I, J)$ are chosen uniformly at random from their domains.

We now prove that the communication complexity of $\textsf{Embedded-Augmented-Index}_{s,t}$ is $\Omega(s \cdot t)$. 

\begin{theorem}\label{thm:embedded-aug-index}
    Every randomized communication protocol for $\textsf{Embedded-Augmented-Index}_{s,t}$ that is correct with probability $\frac{1}{2} + \epsilon$, for any $\epsilon > 0$, requires a message of size at least $$(1 - H_2(\frac{1}{2} - \epsilon)) \cdot s \cdot t \ .$$ 
\end{theorem}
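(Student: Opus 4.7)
The plan is to lower bound $|M| \ge H(M) \ge I_{\mu}(X \,;\, M)$, where $M$ denotes Alice's message and $X := (X_1, \dots, X_s)$ is her full input, via a direct-sum style decomposition of the mutual information. Since $X$ is uniform on $\{0,1\}^{s \cdot t}$ under $\mu$, we have $H_{\mu}(X) = s \cdot t$, so it suffices to establish the dual upper bound $H_{\mu}(X \mid M) \le s \cdot t \cdot H_2(\tfrac{1}{2} - \epsilon)$; together this yields the claimed lower bound on $|M|$.

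The central step is to apply the chain rule for entropy in an ordering of the coordinates that is compatible with Bob's view. Specifically, I would process the $s \cdot t$ bit positions in the order $(1,t), (1,t-1), \dots, (1,1), (2,t), \dots, (s,1)$, so that the chain rule produces the decomposition
\[
H_{\mu}(X \mid M) \;=\; \sum_{i=1}^{s} \sum_{j=1}^{t} H_{\mu}\bigl( X_i[j] \,\big|\, M, X_1, \dots, X_{i-1}, X_i[j+1, \dots, t] \bigr).
\]
The conditioning set for the $(i,j)$-th summand includes exactly the suffix $X_i[j+1, \dots, t]$ that Bob holds when $(I,J) = (i,j)$, together with the prefix strings $X_1, \dots, X_{i-1}$, which act as ``free'' extra conditioning.

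To bound each summand by $H_2(\tfrac{1}{2} - \epsilon)$, fix $(i,j)$ and condition on the event $(I,J) = (i,j)$. Because $(I,J)$ is chosen independently of $(X,M)$, property~5 of mutual information lets me insert this event into the conditioning set without changing the entropy. Conditioned on $(I,J) = (i,j)$, Bob's output $\tilde{X}$ is a randomized function of $(M, X_i[j+1, \dots, t])$ alone, and since the protocol is correct on every input with probability at least $\tfrac{1}{2} + \epsilon$, the error probability averaged over $X \sim \mu$ is at most $\tfrac{1}{2} - \epsilon$. Fano's inequality (Lemma~\ref{lem:fano}) then yields $H_{\mu}\bigl( X_i[j] \mid M, X_i[j+1, \dots, t], (I,J) = (i,j) \bigr) \le H_2(\tfrac{1}{2} - \epsilon)$, and property~4 (conditioning reduces entropy) absorbs the extra $X_1, \dots, X_{i-1}$ in the conditioning set without weakening this bound. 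Summing over all $s \cdot t$ coordinates closes the argument.

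The proof is largely bookkeeping; the main delicate point is lining up exactly what Bob sees in each summand so that Fano applies cleanly. This requires the specific combination of (a) ordering the chain rule so that the ``suffix'' piece of Bob's input appears naturally, (b) using independence of $(I,J)$ from $(X,M)$ to freely insert the event $(I,J) = (i,j)$ into the conditioning, and (c) invoking ``conditioning reduces entropy'' to push the extra prefix strings $X_1, \dots, X_{i-1}$ into the conditioning afterwards. The remaining ingredients---the chain rule, the entropy of a uniform bitstring, and the inequality $|M| \ge H(M) \ge I_{\mu}(X\,;\,M)$---are standard.
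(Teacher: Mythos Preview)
Your approach is essentially the entropy dual of the paper's mutual-information argument: where the paper tracks $I_\mu(X_I[J] : M \mid \cdots)$ and unwinds it via the chain rule for mutual information together with Lemma~\ref{lem:property-mi}, you decompose $H_\mu(X \mid M)$ via the chain rule for entropy in a carefully chosen order. This is arguably cleaner, since the extra conditioning $X_1, \dots, X_{i-1}$ is absorbed by a single ``conditioning reduces entropy'' step rather than a separate lemma.

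There is, however, one genuine gap. You assert that, conditioned on $(I,J)=(i,j)$, Bob's output $\tilde X$ is ``a randomized function of $(M, X_i[j+1,\dots,t])$ alone.'' In the public-coin model this is false: $\tilde X$ is a function of $(M, X_i[j+1,\dots,t], R_B, R)$, and while Bob's private randomness $R_B$ is independent of everything, the public randomness $R$ is \emph{not} independent of $X_i[j]$ once you condition on $M$ (since Alice may compute $M$ from both $X$ and $R$). A one-bit example makes this concrete: if $s=t=1$ and Alice sends $M = X \oplus R$, then Bob recovers $X = M \oplus R$ perfectly, yet $H_\mu(X \mid M) = 1$; your bound $H_\mu(X \mid M) \le H_2(0) = 0$ would fail here. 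Fano's inequality in the form of Lemma~\ref{lem:fano} requires the estimator to be a (randomized) function of the conditioning variable, i.e., the Markov chain $X_i[j] \to Y \to \tilde X$ must hold, and that breaks without $R$ in $Y$.

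The paper handles this by carrying $R$ (and $R_B$) in the conditioning throughout, applying Fano to get $H_\mu(X_I[J] \mid M, R, R_B, \ldots) \le H_2(\tfrac12-\epsilon)$ and ending with $I_\mu(X : M \mid R, R_B) \le H_\mu(M \mid R, R_B) \le H_\mu(M)$. The fix on your side is the same: decompose $H_\mu(X \mid M, R)$ instead of $H_\mu(X \mid M)$, use $H_\mu(X \mid R) = H_\mu(X) = st$ (independence of $X$ and $R$), and invoke $|M| \ge H_\mu(M \mid R) \ge I_\mu(X ; M \mid R)$. With that single change your argument is complete.
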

\newcounter{counterEmbed}  
\setcounter{counterEmbed}{\value{theorem}}

\begin{proof}
    Let $\pi$ be a communication protocol for $\textsf{Embedded-Augmented-Index}_{s,t}$ that is correct with probability at least $\frac{1}{2} + \epsilon$, and let $\tilde{X}$ be the output generated by the protocol. Observe that $\tilde{X}$ is a function of Bob's input $I,J,X_I[J+1, \dots, t]$, Bob's private randomness $R_B$ and the public randomness $R$ used by the protocol, and the message $M$ sent from Alice to Bob. Then, by Fano's inequality (Lemma~\ref{lem:fano}), we obtain:
\begin{align*}
    H_{\mu}(X_I[J] \ | \ I,J,X_I[J+1, \dots, t], R_B, R, M) \le H_2(\frac{1}{2} - \epsilon) \ .     
\end{align*}
Since $H_{\mu}(X_I[J]) = 1$, we obtain:
\begin{align*}
    I_{\mu}(X_I[J] \ & : \ I,J,X_I[J+1, \dots, t], R_B, R, M)   \\
    & = H_{\mu}(X_I[J]) - H_{\mu}(X_I[J] \ | \ I,J,X_I[J+1, \dots, t], R_B, R, M) 
 \ge 1 - H_2(\frac{1}{2} - \epsilon) \ .     
\end{align*}
Let $\delta = 1 - H_2(\frac{1}{2} - \epsilon)$. Then, (please refer to the preliminaries for an overview of the information-theoretic arguments applied in the following derivation) 
\begin{align*}
  \delta & \le   I_{\mu}(X_I[J] \ : \ I,J,X_I[J+1, \dots, t], R_B, R, M) \\
  & = I_{\mu}(X_I[J] \ : \ I,J,X_I[J+1, \dots, t], R_B, R)  \\
  &  \hspace{1.5cm} + I_{\mu}(X_I[J] \ : \  M \ | \ I,J,X_I[J+1, \dots, t], R_B, R) & \text{Chain rule for MI} \\
  & = 0 + I_{\mu}(X_I[J] \ : \  M \ | \ I,J,X_I[J+1, \dots, t], R_B, R) & \mbox{RVs are independent} \\
  & = \frac{1}{t} \sum_{j = 1}^t I_{\mu}(X_I[j] \ : \  M \ | \ I,X_I[j+1, \dots, t], R_B, R, J=j) & \mbox{Def of conditional MI via exp} \\
  & = \frac{1}{t} \sum_{j = 1}^t I_{\mu}(X_I[j] \ : \  M \ | \ I,X_I[j+1, \dots, t], R_B, R) & \mbox{Independence of event $J=j$} \\
  & = \frac{1}{t} I_{\mu}(X_I \ : \  M \ | \ I, R_B, R) & \mbox{Chain rule for MI} \\
  & = \frac{1}{t \cdot s} \sum_{i=1}^s I_{\mu}(X_i \ : \  M \ | \ R_B, R, I = i) & \mbox{Def of conditional MI via exp} \\
  & \le \frac{1}{t \cdot s}  \sum_{i=1}^s I_{\mu}(X_i \ : \  M \ | \ R_B, R, X_1, \dots, X_{i-1}, I = i) & \mbox{Lemma~\ref{lem:property-mi}} \\
  & \le \frac{1}{t \cdot s}  \sum_{i=1}^s I_{\mu}(X_i \ : \  M \ | \ R_B, R, X_1, \dots, X_{i-1}) & \mbox{Independence of event $I=i$} \\
  & \le \frac{1}{t \cdot s} I_{\mu}(X \ : \  M \ | \ R_B, R) & \mbox{Chain rule for MI} \\ 
  & \le \frac{1}{t\cdot s} H_{\mu}(M \ | \ R_B, R)  & \mbox{Definition of MI} \\
  & \le \frac{1}{t \cdot s} H_{\mu}(M) \ . & \mbox{Conditioning reduces entropy}
\end{align*}
We thus obtain $H_{\mu}(M) \ge \delta \cdot t \cdot s$, which completes the proof.
\end{proof}

\subsection{From \textsf{Embedded-Augmented-Index} to \textsf{Maximal Matching}} \label{sec:reduction}

In this section, we will prove our lower bound result, showing that every bounded-deletion streaming algorithm for \textsf{Maximal Matching} requires space $\Omega(n \sqrt{K})$. The proof closely follows \cite{dk20}, adapted to \textsf{Embedded-Augmented-Index} as the underlying hard communication problem and somewhat simplified since we only require a lower bound for small constant factor approximations as they are provided by a \textsf{Maximal Matching} algorithm.

\setcounter{thmsaved}{\value{theorem}}
\setcounter{theorem}{\value{counterLB}}
\addtocounter{theorem}{-1}

\begin{theorem}
    Every bounded-deletion streaming algorithm that computes a \textsf{Maximal Matching} with probability at least $\frac{2}{3}$ requires space $\Omega( n \sqrt{K})$.
\end{theorem}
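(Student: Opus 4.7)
The plan is to reduce $\textsf{Embedded-Augmented-Index}_{s,t}$ with parameters $t = \Theta(K)$ and $s = \Theta(n/\sqrt{K})$ to bounded-deletion \textsf{Maximal Matching}. By Theorem~\ref{thm:embedded-aug-index}, any protocol for the former with constant-bias success requires $\Omega(s \cdot t) = \Omega(n \sqrt{K})$ bits of communication; simulating a streaming algorithm in the standard one-way two-party way (Alice runs the algorithm on her insertions, sends the memory state to Bob, who continues with his deletions) then transfers this bound to space. Amplification of the assumed $2/3$ success probability to any $\frac{1}{2}+\epsilon$ constant bias is standard.

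The first key step is to recall the DK20 reduction that, on a scaled instance with $n' = \sqrt{K}$ vertices and $t' = \Theta(K)$ encoded bits, turns an \textsf{Augmented-Index} instance into a \textsf{Maximal Matching} instance. Alice's bitstring indexes the edges of a bipartite Erd\H{o}s--R\'enyi graph $G' = (A', B', E')$ with edge probability $\tfrac{1}{2}$; Bob, who holds $J$ and the suffix $X[J+1,\dots,t']$, uses the suffix to select a subset of $A'$ and $B'$ and insert $O((n')^2) = O(K)$ deletions so that the surviving edges in the induced subgraph form a large matching $M'$. The construction is arranged so that the bit $X[J]$ is mapped to the presence/absence of a specific distinguished edge $e^\star$ whose position in $M'$ is random and unknown to Alice, and such that any maximal matching in the remaining graph recovers the truth of $e^\star$ (and thus $X[J]$) with constant probability.

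The extension to $\textsf{Embedded-Augmented-Index}_{s,t}$ is then to parallel-compose $s$ independent copies. Alice takes her $s$ bitstrings $X_1, \dots, X_s$ and builds $s$ vertex-disjoint graphs $G_1, \dots, G_s$, each a scaled DK20 instance on $\sqrt{K}$ vertices encoded by $X_i$, and streams their edge insertions. Bob, holding $(I, J, X_I[J+1,\dots,t])$, executes the DK20 deletion pattern only inside $G_I$, leaving every $G_{i\neq I}$ untouched. The total number of deletions stays $O(K)$, and the total vertex count is $\Theta(n)$ as required. Any maximal matching of the aggregate graph restricts to a maximal matching of $G_I$ after Bob's deletions; invoking the DK20 decoding routine on that restriction lets Bob read off $X_I[J]$ with probability $\tfrac{1}{2} + \Omega(1)$.

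The main technical obstacle is making sure the DK20 analysis applies cleanly even though Alice is oblivious of $I$: the anti-concentration / randomness-of-embedding arguments in DK20 must hold \emph{simultaneously} on every copy. This is however automatic because Alice constructs all copies independently and symmetrically from identical distributions; the index $I$ enters only at Bob's decoding and thus need not be known during encoding. Combining the reduction with Theorem~\ref{thm:embedded-aug-index}, any single-pass streaming algorithm for \textsf{Maximal Matching} succeeding with probability $\tfrac{2}{3}$ must use $\Omega(s\cdot t) = \Omega(n\sqrt{K})$ bits of space, which is the claimed bound.
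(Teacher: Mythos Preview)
Your proposal is correct and follows essentially the same approach as the paper: build $s=\Theta(n/\sqrt{K})$ vertex-disjoint DK20-style instances of size $\Theta(\sqrt{K})$, let Bob delete only inside $G_I$, and invoke the \textsf{Embedded-Augmented-Index} lower bound. The paper spells out the DK20 embedding explicitly (the $\sqrt{t}\times\sqrt{t}$ corner filled with $X_i$, public-coin padding, XOR masking with $Y_i$, and random row/column permutations) and carries out the materialization/permutation analysis to obtain success probability $\tfrac12+\tfrac{1}{120}-o(1)$ directly, rather than appealing to amplification; your remark about amplifying $2/3$ is unnecessary, since a single run already yields constant bias on the bit via the random-guess-on-failure trick.
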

\setcounter{theorem}{\value{thmsaved}}
\addtocounter{theorem}{-1}
\begin{proof}
We will give a reduction to $\textsf{Embedded-Augmented-Index}_{s,t}$. For integers $s,t$ whose  values we will determine later, let $X = X_1, \dots, X_s \in \{0, 1\}^t, I \in [s]$ and $ J \in [t]$ be an $\textsf{Embedded-Augmented-Index}_{s,t}$ instance, and let $\mathcal{A}$ be a bounded-deletion streaming algorithm for \textsf{Maximal Matching} that outputs a maximal matching with probability $\frac{2}{3}$ on every instance. We assume that, if $\mathcal{A}$ fails, then it still outputs a matching, but it may not be maximal.

\vspace{0.1cm}
\noindent \textbf{Alice.} Given $X$, Alice constructs the bipartite graph $G=G_1 \ \dot{\cup} \ G_2 \  \dot{\cup} \ \dots \  \dot{\cup} \ G_s$, which is a disjoint union of the graphs $(G_i)_{1 \le i \le s}$. We now describe the construction of graph $G_i$, for any $ 1 \le i \le s$. The vertex set $A_i \cup B_i$ of $G_i$ is such that $|A_i| = |B_i| = 5 \cdot \sqrt{t}$. The edge set $E_i$ of $G_i$ is best described via its bipartite incidence matrix $M_i$. To obtain $M_i$, we first construct the matrix $\tilde{M}_i$, which has the same dimensions as $M_i$. The construction process is as follows:
\begin{enumerate}
    \item The top-left square sub-matrix of $\tilde{M}_i$ with side length $\sqrt{t}$ is filled with the entries of $X_i$ from left-to-right and top-to-bottom.
    \item All other entries of $\tilde{M}_i$ are filled with uniform random bits sampled from public randomness.
    \item Alice and Bob sample a random binary square matrix $Y_i$ with side length $5 \sqrt{t}$ such that each entry is $1$ with probability $\frac{1}{2}$. They compute the entry-wise XOR between $\tilde{M}_i$ and $Y_i$. Let $\tilde{M}_i'$ denote the resulting matrix.
    \item Alice and Bob sample random permutations $\sigma_i, \pi_i: [5 \sqrt{t}] \rightarrow [5 \sqrt{t}]$ from public randomness. The matrix $M_i$ is obtained from $\tilde{M}_i'$ by permuting the rows and columns with the permutations $\sigma_i$ and $\pi_i$, respectively.
\end{enumerate}
Alice then runs algorithm $\mathcal{A}$ on the edges of graph $G$, presented to the algorithm in random order.

\vspace{0.1cm}
\noindent \textbf{Bob.} Recall that Bob holds the indices $I \in [s]$ and $J \in [t]$ as well as the suffix $X_I[J+1, \dots, t]$.
Bob introduces edge deletions, but only into the graph $G_I$. To this end, consider the matrix $\tilde{M}_I$, and let $a,b$ be such that the entry $\tilde{M}_I[a,b]$ corresponds to the bit $X_I[J]$. Then, observe that Bob knows all entries of the submatrix $\tilde{M}_I(a,b)$ with top-left corner at position $(a,b)$ and bottom-right corner being the bottom-right corner of $\tilde{M}_I$, except the entry $\tilde{M}_I(a,b)[a,b]$, either because the bits $X_I[J+1, t]$ correspond to these entries or the entries were constructed from public randomness. Denote by 
$$\mathcal{I} = \{(i,j) \ | \ a \le i \le 5 \sqrt{t}, b \le j \le 5 \sqrt{t} \mbox{ and } i - a \neq j - b\} \, $$ 
i.e., the indices in $\mathcal{I}$ describe the entries of $\tilde{M}_I$ that constitute off-diagonal entries in the submatrix $\tilde{M}_I(a,b)$. Then, for each index $(i,j) \in \mathcal{I}$, if $M_{\sigma(i), \pi(j)} = 1$ then Bob feeds an edge deletion that turns this entry to $0$ into the algorithm $\mathcal{A}$, i.e., $M_{\sigma(i), \pi(j)} = 0$ after the deletion. All deletions are fed into $\mathcal{A}$ in random order.

Bob then obtains the output matching $OUT$ produced by algorithm $\mathcal{A}$.

\vspace{0.1cm}
\noindent \textbf{Analysis.} 
 We assume from now on that the algorithm $\mathcal{A}$  does not fail. Then, we  claim that, with high probability, $|OUT| \ge \frac{1}{4} \cdot 5 \sqrt{t} - o(\sqrt{t})$. Indeed, observe that none of the diagonal entries in $\tilde{M}_i'$ are subsequently deleted, and the edges corresponding to the $1$-entries of this diagonal form a matching. By concentration bounds, with high probability, there are at least $\frac{1}{2} \cdot 5 \sqrt{t} - o(\sqrt{t})$ $1$-entries. The claim follows from the observation that $M_i$ is obtained from $\tilde{M}_i'$ by permuting the rows and columns, which are operations that preserve the matching size, and by the fact that a maximal matching is at least half the size of a maximum matching.

Next, we say that the entry $X_I[J]$ materializes as an edge in graph $G$ if $\tilde{M}_I'[a,b] = 1$. Recall that $\tilde{M}_I'[a,b] = \tilde{M}_I[a,b] \ XOR \ Y[a,b]$, where $Y[a,b]$ is a uniform random bit. Hence, $X_I[J]$ materializes as an edge with probability $1/2$.

We will now argue that, if the edge corresponding to $X_I[J]$ materializes, then it is also contained in $OUT$ with constant probability. Indeed, due to the application of the random permutations $\sigma_i$ and $\pi_i$, each of the diagonal entries of $\tilde{M}_I'(a,b)$ with value $1$ are reported in the matching $OUT$ with equal probability. Hence, conditioned on the entry $X_I[J]$ materializing as an edge in $G_I$, this edge is contained in $OUT$ with probability at least 
$$\frac{|OUT| - \sqrt{t}}{5 \sqrt{t}} \ge \frac{\frac{5}{4} \sqrt{t} - o(\sqrt{t}) - \sqrt{t}}{5 \sqrt{t} } \ge \frac{1}{20} - o(1) \ , $$ since $OUT$ contains at least $|OUT| - \sqrt{t}$ diagonal entries from the matrix $\tilde{M}_I'(a,b)$. 

Hence, the probability that the edge corresponding to $X_I[J]$ materializes and is reported in $OUT$ is at least $\frac{1}{40} - o(1)$. If the edge is not observed in $OUT$ then the algorithm reports either $0$  or $1$ as a guess for $X_I[J]$, each with probability $\frac{1}{2}$. 

Using the following definition of $p$,
\begin{align*}
    p & = \Pr[\mathcal{A} \mbox{ succeeds and } X_I[J] \mbox{ materializes as an edge and is reported in } OUT] \\
    & \ge \frac{2}{3} \cdot (\frac{1}{40}-o(1)) = \frac{1}{60} - o(1) \ ,    
\end{align*} 
we can bound the overall success probability of this strategy as follows:
\begin{align*}
  \Pr[X_I[J]  \mbox{ recovered correctly}] & 
  \ge p + (1-p) \cdot \frac{1}{2} \\
  & = \frac{1}{60} + \frac{59}{60} \cdot \frac{1}{2} - o(1)  = \frac{1}{2} + \frac{1}{120} - o(1) \ .
\end{align*}


By Theorem~\ref{thm:embedded-aug-index}, the algorithm therefore requires space $\Omega(s \cdot t)$. Lastly, to ensure that at most $K$ deletions are introduced, we set $t = K$, and to ensure that the input graph has $n$ vertices, we set $s = \frac{n}{2 \cdot 5 \sqrt{t}}$. The lower bound thus gives $\Omega(s \cdot t) = \Omega(\frac{n}{\sqrt{K}} \cdot K) = \Omega(n \cdot \sqrt{K})$, as desired.
\end{proof}

\section{Deterministic Algorithm and Space Lower Bound for \textsf{Maximal Matching}} \label{sec:det}

The aim of this section is to settle the deterministic space complexity of \textsf{Maximal Matching} in bounded-deletion streams. We will prove the following theorem:

\setcounter{thmsaved}{\value{theorem}}
\setcounter{theorem}{\value{counterDET}}
\addtocounter{theorem}{-1}
\begin{theorem}
There is a single-pass streaming algorithm that uses $\tilde{O}(n \cdot K)$ space and outputs a maximal matching in any dynamic graph stream with at most $K$ deletions. Moreover, any deterministic algorithm for \textsf{Maximal Matching} requires $\Omega(n \cdot K)$ space.
\end{theorem}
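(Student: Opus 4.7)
My plan is to run \textsf{Hierachical-Greedy}$(L)$ on the substream of insertions with $L = K+1$, while separately storing the at-most-$K$ deletions in a set $D$. Since there are $K+1$ matchings and at most $K$ deleted edges, pigeonhole yields an index $\ell \in [L]$ with $M_\ell \cap D = \varnothing$. Setting $M := M_\ell$ and invoking Lemma~\ref{lem:key-1} with $F = \varnothing$ (every vertex of $M_\ell$ is trivially matched in $M$), it follows that greedily adding edges from $(M_1 \cup \cdots \cup M_{\ell-1}) \setminus D$ extends $M$ to a globally maximal matching of $G$. Storing the $K+1$ hierarchical matchings uses $O(nK \log n)$ bits, and storing $D$ uses $O(K \log n)$ bits, giving a total of $\tilde{O}(nK)$.

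\textbf{Lower bound, setup.} I prove the lower bound in the one-way two-party communication model: Alice holds a graph $G \in \mathcal{G}_K(n)$ and Bob holds an edge set $D \subseteq E(G)$ with $|D| \leq K$. The crucial structural property is that, for any component $H \in \mathcal{H}_K$ and any $D' \subseteq E(H)$ with $|D'| \leq K$, every maximal matching of $H - D'$ covers all $A$-vertices of $H$. Indeed, each $A$-vertex still has degree at least $2K - K = K$ in $H - D'$; if some $a \in A$ were unmatched in a maximal matching $M$, then all $\geq K$ neighbors of $a$ in $B$ would have to be matched, forcing $|M| \geq K$, which is impossible because $|A| = K$ and $a$ is unmatched, so $|M| \leq K - 1$.

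\textbf{Lower bound, extraction procedure.} Given any transcript $\pi$ produced by a deterministic protocol, I would have Bob recover $K+1$ incident edges at each $A$-vertex as follows. For a fixed $a \in A$, simulate the protocol from $\pi$ with no deletions to obtain a maximal matching containing an edge $e_1$ incident to $a$. For $t = 2, \ldots, K+1$, re-simulate the protocol from $\pi$ after feeding in the deletions $\{e_1, \ldots, e_{t-1}\}$ (at most $K$ edges, within budget); by the structural property, $a$ is still matched, yielding a new edge $e_t$ incident to $a$ and distinct from all previously recovered edges. Repeating this for every $a \in A$, using independent simulations from the same $\pi$, Bob recovers a total of $(K+1) \cdot \tfrac{n}{4}$ edges from $\pi$ alone.

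\textbf{Compression argument and obstacles.} To finish, I would encode any $G \in \mathcal{G}_K(n)$ as a pair $(\pi, R)$, where $R$ specifies, for each $A$-vertex, the remaining $K - 1$ incident edges chosen from the residual pool of $2K - 1$ unrecovered $B$-candidates in its component. This encoding lets the decoder reconstruct $G$, and has length at most $|\pi| + \tfrac{n}{4} \log \binom{2K-1}{K-1}$. The information-theoretic threshold is $\log |\mathcal{G}_K(n)| = \tfrac{n}{4} \log \binom{3K}{2K}$, and a direct calculation shows $\log \binom{3K}{2K} - \log \binom{2K-1}{K-1} = \Omega(K)$. Hence $|\pi| = o(nK)$ would compress $\mathcal{G}_K(n)$ below its entropy, a contradiction, forcing $|\pi| = \Omega(nK)$. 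The main obstacle I anticipate is making the encoding step fully rigorous: one must fix a canonical order in which Bob reads off the $K+1$ recovered edges at each $A$-vertex and verify that this ordering (together with shared public information identifying the components) lets the decoder reproduce $G$ unambiguously from $(\pi, R)$ alone.
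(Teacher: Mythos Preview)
Your proposal is correct and follows essentially the same approach as the paper: the upper bound via \textsf{Hierachical-Greedy}$(K+1)$ plus pigeonhole and Lemma~\ref{lem:key-1}, and the lower bound via the graph family $\mathcal{G}_K(n)$, the repeated-simulation extraction of $K+1$ edges per $A$-vertex, and the compression argument comparing $\log\binom{3K}{2K}$ with $\log\binom{2K-1}{K-1}$. The obstacle you flag is not a genuine issue: since the protocol is deterministic, the decoder can rerun exactly the same simulations Bob performs (in a fixed order over $A$-vertices and iterations), so the set of $K+1$ recovered edges at each $a$ is canonically determined by $\pi$ alone, making the encoding $(\pi,R)$ unambiguous.
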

\setcounter{theorem}{\value{thmsaved}}

A deterministic $O(n \cdot K \cdot \log n)$ algorithm that matches the result stated in the previous theorem will be established in Subsection~\ref{sec:det-alg}, and our our $\Omega(n \cdot K)$ space lower bound result will be given in Subsection~\ref{sec:det-lb}.

\subsection{Algorithm} \label{sec:det-alg}
Our deterministic $O(n \cdot K \cdot \log n)$ space algorithm can easily be obtained from the tools developed in Section~\ref{sec:alg}. Our algorithm computes a hierachical matching on $L=K+1$ levels via Algorithm~\ref{alg:hierachical-matching} on the substream of edge insertions, which uses space $\tilde{O}(n \cdot K)$. At the same time, the algorithm stores the (multi-)set of edge deletions $D$, using space $O(K \cdot \log n)$. 

We now distinguish two cases. First, suppose that the hierachical matching data structure consists of $L$ non-empty levels. Then, we pick a level, say $M_i$, that was not subject to any deletions, which must exist due to the pigeonhole principle, and greedily add all edges from the matchings $M_1, \dots, M_{i-1} \setminus D$ that are not deleted to $M_i$. Then, by Lemma~\ref{lem:key-1}, the resulting matching is maximal.

In the second case, suppose that the matching data structure contains fewer than $L$ non-empty levels. Then, the algorithm has stored all edges of the input graph and can even produce a maximum matching. 

This establishes the following theorem:

\begin{theorem} \label{thm:det-alg}
    There is a deterministic insertion-deletion streaming algorithms for \textsf{Maximal Matching} with space $O(n \cdot K \cdot \log n)$, where $K$ is an upper bound on the number of deletions in the stream. 
\end{theorem}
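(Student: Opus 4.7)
The plan is to instantiate \textsf{Hierarchical-Greedy}$(L)$ with $L = K+1$ on the substream of edge insertions while, in parallel, storing the multiset $D$ of all edge deletions verbatim. The space bound is immediate: each of the $K+1$ matchings $M_1, \dots, M_{K+1}$ contains at most $n/2$ edges encoded in $O(\log n)$ bits each, and $D$ contributes only $O(K \log n)$ further bits, for a total of $O(nK \log n)$.

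For correctness I would split the post-processing into two cases. In the first case, all $K+1$ levels of the hierarchy are non-empty at the end of the insertion phase. Since the bottom-up application rule matches each deletion to a single copy in a single level, and $|D| \le K$, the pigeonhole principle guarantees an index $i$ such that $M_i \cap D = \varnothing$. Consequently every vertex of $V(M_i)$ remains matched after the deletions are applied. Setting $M := M_i$ and greedily appending the surviving edges of $M_1, \dots, M_{i-1}$ then fulfils the hypothesis of Lemma~\ref{lem:key-1} (with $\ell = i$ and $F = \varnothing$), so the resulting $M$ is a maximal matching of $G = (V, I \setminus D)$.

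In the second case, fewer than $K+1$ levels are non-empty. The key observation here is that the non-empty levels always form a prefix $M_1, \dots, M_r$ with $r \le K$: an edge placed by Algorithm~\ref{alg:hierachical-matching} in $M_\ell$ must have conflicted at every lower level, which forces those lower levels to be non-empty beforehand, and the insertion phase never removes edges from any level. Consequently $M_{r+1}$ was empty throughout the stream, so any arriving edge that did not fit in $M_1, \dots, M_r$ would have been accepted in $M_{r+1}$. Hence no edge was ever discarded, the hierarchy faithfully stores all of $I$, and applying $D$ explicitly recovers $G$, from which any maximal matching can be read off directly.

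The substantive correctness obstacle — that a single undamaged level together with the surviving edges of all lower levels suffices to rebuild a globally maximal matching — has already been resolved by Lemma~\ref{lem:key-1}. The remaining work is the elementary combinatorial observation underpinning Case B, namely that the levels of the hierarchical greedy fill in a bottom-up monotone fashion during insertions, so that an empty top level at the end certifies that nothing was discarded. Together with pigeonhole over the $K+1$ levels in Case A, this gives the claimed $O(nK \log n)$-space deterministic algorithm.
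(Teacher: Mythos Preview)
Your proposal is correct and follows essentially the same approach as the paper: run \textsf{Hierarchical-Greedy}$(K+1)$ on the insertion substream while storing $D$, then in post-processing either invoke Lemma~\ref{lem:key-1} on an untouched level (via pigeonhole) or, if some level is empty, observe that the entire graph has been stored. Your Case~B argument that the non-empty levels form a prefix is a helpful elaboration of a point the paper leaves implicit.
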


\subsection{Lower Bound} \label{sec:det-lb}
We now give our $\Omega(n \cdot K)$ space lower bound for deterministic insertion-deletion streaming algorithms for \textsf{Maximal Matching}, where $K$ is an upper bound on the number of deletions in the input stream. 

We prove our lower bound in the one-way two-party communication setting, where Alice holds edge insertions $E$,  Bob holds edge deletions $D \subseteq E$ with $|D| \le K$, Alice sends a message to Bob, and Bob output the result of the computation. It is well-known that a lower bound on the size of the message required to solve the problem also constitutes a lower bound on the space required by one-pass streaming algorithms. The following theorem thus implies the lower bound stated in Theorem~\ref{thm:det_maximal}.

\begin{theorem}
 Every deterministic one-way two-party protocol for \textsf{Maximal Matching}, where Bob holds at most $K$ edge deletions, requires a message of size at least $\frac{n \cdot K}{8}$ bits.
\end{theorem}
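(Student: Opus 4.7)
The plan is to follow the three-part blueprint sketched in the techniques overview, and the main work is to (i) isolate the combinatorial ``saturation'' property that makes the family $\mathcal{G}_K(n)$ useful, (ii) turn a short protocol into a nontrivial amount of recovered graph information, and (iii) convert this into a compression contradiction via a counting argument.

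First, I would establish the key structural lemma: for any $H=(A,B,E) \in \mathcal{H}_K$ and any $D \subseteq E$ with $|D| \le K$, every maximal matching of $H-D$ saturates $A$. Suppose some $a \in A$ were left unmatched. Since $\deg_H(a)=2K$ and at most $K$ edges are deleted, $a$ retains at least $K$ neighbors in $H-D$, and all of them must be matched to vertices in $A \setminus \{a\}$ by maximality. But $|A \setminus \{a\}| = K-1 < K$, contradiction. Since $\mathcal{G}_K(n)$ is a disjoint union of $n/(4K)$ copies of graphs from $\mathcal{H}_K$ and a deletion in one copy does not affect others, the same property holds for every $G \in \mathcal{G}_K(n)$ under any $K$ deletions.

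Next, I would formalize Bob's iterative ``learning'' strategy using a \emph{fixed} deterministic protocol $\pi$. For each $A$-vertex $a$ (across all subgraphs), Bob locally simulates $\pi$ starting from the message $m = \pi(G)$ a total of $K+1$ times. In the $i$-th run he feeds, as edge deletions, the $i-1$ edges incident to $a$ discovered in previous runs. Every simulation uses at most $K$ deletions, so by the saturation lemma each output maximal matching assigns $a$ a new mate, producing $K+1$ distinct edges incident to $a$. Doing this for every $A$-vertex, Bob extracts $(K+1)\cdot \tfrac{n}{4}$ specific edges of $G$ from $m$ alone (all deletion simulations are purely local to Bob and do not affect the real message length).

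Finally, I would close with a counting/compression argument. There are $|\mathcal{G}_K(n)| = \binom{3K}{2K}^{n/4}$ graphs in the family, while the number of graphs consistent with a fixed output of Bob's extraction procedure is at most $\binom{2K-1}{K-1}^{n/4}$, because once $K+1$ of the $2K$ neighbors of each $A$-vertex are known, the remaining $K-1$ neighbors lie among $2K-1$ unspecified $B$-vertices. Hence the number of distinct messages is at least
\[
\left(\frac{\binom{3K}{2K}}{\binom{2K-1}{K-1}}\right)^{n/4} \;=\; \left(\frac{2\binom{3K}{K}}{\binom{2K}{K}}\right)^{n/4},
\]
so the message length is at least $\tfrac{n}{4}\log_2\!\bigl(2\binom{3K}{K}/\binom{2K}{K}\bigr)$. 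A routine Stirling estimate gives $\binom{3K}{K}/\binom{2K}{K} \asymp (27/16)^K$, so the log ratio is about $0.75\,K$, comfortably $\ge K/2$ for all relevant $K$; hence the message size is at least $\tfrac{nK}{8}$ bits. The only genuinely delicate step is calibrating this combinatorial inequality to yield the stated constant $\tfrac{1}{8}$; the conceptual content is entirely in the saturation lemma plus the iterated-simulation extraction.
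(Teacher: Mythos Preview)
Your proposal is correct and follows essentially the same three-step approach as the paper: the saturation lemma for $\mathcal{G}_K(n)$, Bob's iterated simulation to extract $K+1$ edges per $A$-vertex, and a compression/counting argument on $\binom{3K}{2K}^{n/4}$ versus $\binom{2K-1}{K-1}^{n/4}$. The one minor divergence is in the final calibration you flag as ``delicate'': the paper avoids Stirling entirely by using the elementary telescoping identity $\binom{a+1}{b+1}=\tfrac{a+1}{b+1}\binom{a}{b}$ to obtain $\binom{3K}{2K}\ge\binom{2K-1}{K-1}(3/2)^{K+1}$, which immediately gives $|{\sf Msg}|\ge \tfrac{n}{4}(K+1)\log_2(3/2)\ge \tfrac{nK}{8}$ for all $K\ge 1$ without any asymptotic estimates.
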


\begin{proof}
 Let $\mathcal{H}_K$ be the family of bipartite graphs $G=(A, B, E)$ with $|A| = K$, $|B| = 3 \cdot K$, and the degree of every $A$-vertex is $2 \cdot K$.
 Furthermore, for parameters $K$ and $n$ with $n$ being a multiple of $4 \cdot K$, let $\mathcal{G}_K(n)$ be the family of graphs obtained as the disjoint union of $n / (4K)$ graphs picked from $\mathcal{H}$.
 Then, the number of graphs in $\mathcal{G}_K(n)$ is:
 \begin{align*}
  |\mathcal{G}_K(n)| = {3 \cdot K \choose 2 \cdot K}^{K \cdot \frac{n}{4 K}} 
   = {3 \cdot K \choose 2 \cdot K}^{\frac{n}{4}}  \ , 
 \end{align*}
and any encoding of this set of graphs requires at least
\begin{align}
 \log \left( {3 \cdot K \choose 2 \cdot K}^{\frac{n}{4}} \right) = \frac{n}{4} \log {3 \cdot K \choose 2 \cdot K} \text{ bits.} \label{eqn:184}
\end{align}

 Next, let $G' \in \mathcal{G}_K(n)$ be any graph, and let $G$ be obtained from $G'$ by removing at most $K$ arbitrary edges. Then, observe that every maximal matching in $G$ matches all $A$-vertices. To see this, suppose for the sake of a contradiction that this is not the case and there exists an $a \in A$ and a maximal matching $M$ in $G$ such that $a$ is not matched in $M$. Observe that $\deg_G(a) \ge K$ holds, since at most $K$ edges are deleted and the degree of $a$ in $G'$ is $2K$. For $M$ to be maximal, since $a$ is not matched in $M$, all of $a$'s neighbors must be matched in $M$, which implies that the component that contains $a$ must contribute to $M$ with at least $\deg_G(a) \ge K$ edges. Recall, however, that the component that contains $a$ consists of only $K$ $A$-vertices. Then, since the component is bipartite and $a$ is not matched, at most $K-1$ edges of the component can contribute to $M$, a contradiction. Every $A$-vertex must therefore be matched in any maximal matching in $G$.
  
 Let $\mathcal{P}$ be a communication protocol as in the statement of the theorem, and suppose that the message ${\sf Msg}$ sent from Alice to Bob is of length at most $|{\sf Msg}|$. 
We will now argue that $\mathcal{P}$ yields an efficient encoding of the graphs in $\mathcal{G}_K(n)$ using a number of bits that depends on $|{\sf Msg}|$. Since this encoding must require at least the number of bits stated in Equation~\ref{eqn:184}, we obtain a lower bound on $|Msg|$.

Suppose that Alice's input is a graph $G \in \mathcal{G}_K(n)$. We will now argue that Bob can learn $K+1$ incident edges to every $A$-vertex in $G$ from the message ${\sf Msg}$ sent from Alice to Bob, as follows:

Let $a \in A$ be any vertex. Given ${\sf Msg}$, first Bob does not delete any edges and completes the protocol $\mathcal{P}$ to produce a maximal matching $M_1$. As previously argued, $M_1$ must match every $A$-vertex, and denote by $e_1$ the edge in $M_1$ that is incident to $a$. Again, starting with the message ${\sf Msg}$, Bob simulates another execution of $\mathcal{P}$ and feeds the deletion $e_1$ into $\mathcal{P}$. The protocol then produces another maximal matching $M_2$, and let $e_2 \in M_2$ denote the edge incident to $a$. We observe that $e_2$ is a different edge to $e_1$ since the edge $e_1$ was explicitly deleted. Repeating this process, Bob again simulates the execution of $\mathcal{P}$ and feeds both $e_1$ and $e_2$ as edge deletions into $\mathcal{P}$, which allows him to learn yet another edge $e_3$ that is different to $e_1$ and $e_2$ incident to $a$. After repeating this process $K$ times, Bob has learned $K+1$ edges incident to $a$. Last, observe that Bob can execute this strategy for every $A$-vertex, allowing Bob to learn $K+1$ edges incident on every $A$-vertex.

Since the protocol is deterministic, Alice does in fact already know the set of $\frac{n}{4} \cdot (K+1)$ edges that Bob learns from the message ${\sf Msg}$. Using ${\sf Msg}$, Alice can encode any input graph in $\mathcal{G}_K(n)$ using the following encoding: 

First, recall that ${\sf Msg}$ encodes $K+1$ edges for every $A$-vertex. It remains to encode the remaining $2K-(K+1) = K-1$ edges for each $A$-vertex. Observe that, for each $A$-vertex, there are only $3K - (K+1) = 2K-1$ $B$-vertices available, i.e., the remaining neighbors in their respective components. Hence, the remaining neighborhood of each $A$-vertex can be encoded with $\log{2K-1 \choose K-1}$ bits. 

This encoding uses
\begin{align}
  \frac{n}{4} \cdot \log {2K-1 \choose K-1} + |{\sf Msg}|  \label{eqn:431}
\end{align}
bits in total, and combined with Equation~\ref{eqn:184}, we obtain:
\begin{align}
 \frac{n}{4} \cdot \log {2K-1 \choose K-1} + |{\sf Msg}| & \ge \frac{n}{4} \log {3 \cdot K \choose 2 \cdot K} \ , \mbox{ or} \nonumber \\
 |{\sf Msg}| & \ge \frac{n}{4} \log {3 \cdot K \choose 2 \cdot K} - \frac{n}{4} \cdot \log {2K-1 \choose K-1} \ . \label{eqn:571}
\end{align}
Next, since $\frac{a+1}{b+1} \cdot {a \choose b} = {a+1 \choose b+1}$ holds, which further implies ${a+(K+1) \choose b+(K+1)} \ge {a \choose b} \cdot \left( \frac{a+(K+1)}{b+(K+1)} \right)^{K+1}$, we obtain 
$${3 \cdot K \choose 2 \cdot K} \ge {2K - 1 \choose K-1} \cdot \left(\frac{3}{2}\right)^{K+1} \ , $$
and using this in Inequality~\ref{eqn:571} yields
\begin{align*}
 |{\sf Msg}| & \ge \frac{n}{4} \left( \log \left( {2K-1 \choose K-1} \left(\frac{3}{2} \right)^{K+1} \right)  -  \log  {2K-1 \choose K-1}  \right) \\
 & = \frac{n}{4} \left( \log  {2K-1 \choose K-1} + (K+1) \log(\frac{3}{2})   -  \log  {2K-1 \choose K-1}  \right) \\
 & = \frac{n}{4} \left(  (K+1) \log(\frac{3}{2})    \right) \ge \frac{n \cdot K}{8} ,
\end{align*}
which completes the proof.
\end{proof}

\section{$O(1)$-Approximation to \textsf{Maximum Matching} using $\tilde{O}(n + K)$ Space} \label{sec:maximum-matching}

In this section, we show that, for any $\epsilon > 0$, a $(2+\epsilon)$-approximation to \textsf{Maximum Matching} can be computed using $\tilde{O}(n+K/\epsilon)$ space, thus showing that there is a phase transition in space complexity as one moves from requiring maximality to simply a near-optimal solution in terms of size.

At the heart of our algorithm lies a variant of the \textsf{Hierachical-Greedy} algorithm (listed in Algorithm~\ref{alg:hierachical-matching}) that we denote by \textsf{Budgeted-Hierachical-Greedy} (listed in Algorithm~\ref{alg:budgeted-hierachical-matching}). Similar to \textsf{Hierachical-Greedy}, this algorithm is also executed on the substream of edge insertions. While \textsf{Hierachical-Greedy($L$)} is parametrized by the number of desired matchings in the computed matching hierarchy $L$, \textsf{Budgeted-Hierachical-Greedy($B$)} computes a hierachical matching structure that consists of  (at most) $B$ edges.

\begin{algorithm}
    \begin{algorithmic}[1]
        \REQUIRE Input stream of edge insertions (multi-edges allowed), integer parameter $B \ge 1$
        \STATE $M_1 \gets \varnothing$, $L \gets 1$
        \WHILE{stream not empty}
            \STATE Let $e$ be the next edge in the stream
            \STATE Let $\ell \in [L]$ be the smallest integer such that $M_{\ell} \cup \{e\}$ is a matching, if no such matching exists then let $L \gets L+1$, $M_L = \varnothing$, and let $\ell \gets L$
            \STATE $M_{\ell} \gets M_{\ell} \cup \{e\}$
            \IF{$\sum_{i=1}^{L} |M_i| = B+1$} \label{line:581}
                \STATE Remove an arbitrary edge from matching $M_L$
                \IF {$M_L = \varnothing$}
                    \STATE Delete $M_L$ and let $L \gets L - 1$
                \ENDIF
            \ENDIF
        \ENDWHILE
        \RETURN $(M_1, M_2, \dots, M_{L})$
    \end{algorithmic}
    \caption{\textsf{Budgeted-Hierachical-Greedy($B$)}  \label{alg:budgeted-hierachical-matching}}
\end{algorithm}

Similar to \textsf{Hierachical-Greedy}, for each incoming edge, \textsf{Budgeted-Hierachical-Greedy} finds the smallest level such that the edge can be inserted into this level. This may require creating a new level if the current edge cannot be inserted into any of the existing levels. In Line~\ref{line:581}, the algorithm checks that the total number of edges stored does not exceed the budget parameter $B$. If it does exceed the budget then the algorithm removes an arbitrary edge from the highest level. This makes sure that the structure consists of at most $B$ edges, and, if the edge deleted was the only edge in the matching in the highest level, then the matching in the highest level is deleted and the number of levels is decreased by one.

We will now show that, when using a budget $B = n + \frac{K}{\epsilon}$, then the matchings returned by the run \textsf{Budgeted-Hierachical-Greedy($B$)} contain a $(2+\epsilon)$-approximation to $\textsf{Maximum Matching}$, even after the substream of deletions is applied to the returned matchings.

To this end, we first give a technical lemma that establishes key  properties of \textsf{Budgeted-Hierachical-Greedy}:

\begin{lemma} 
\label{lem:prefixHM}
Let $M_1, \dots, M_L$ be the output of \textsf{Budgeted-Hierachical-Greedy($B$)}. Then the following two properties hold:

\begin{itemize}
\item[($P_1$)]
The algorithm never discarded an edge from the matchings $M_1, M_2, \dots, M_{L-1}$. Hence, they constitute a $(L - 1)$-level hierarchical matching data structure, or, equivalently, $M_1, \dots, M_{L-1}$ also constitutes the output of \textsf{Hierachical-Greedy}($L-1$).

\item[($P_2$)]
The algorithm either never discarded an edge or the matchings $M_1, M_2, \dots, M_{L}$ together contain $B$ edges.
\end{itemize}
\end{lemma}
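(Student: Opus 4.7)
The plan is to track two quantities over time as \textsf{Budgeted-Hierachical-Greedy}$(B)$ processes the stream: the total number of stored edges $T_t := \sum_{i=1}^{L_t} |M_i|$ after step $t$, and the current number of non-empty levels $L_t$. The key observation is that the execution naturally splits into two phases according to whether a discard has yet happened, and within each phase the sequences $T_t$ and $L_t$ obey simple monotonicity properties from which both $(P_1)$ and $(P_2)$ follow immediately.

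First I would formalize the two phases. Phase~1 consists of the steps during which $T_t \le B$ throughout, i.e., all the steps before the first discard is triggered; in Phase~1, clearly $T_t$ and $L_t$ are non-decreasing and no $M_k$ ever loses an edge. Phase~2 begins at the step where $T$ first momentarily hits $B+1$ and the first discard is triggered, and runs to the end of the stream. I would then prove two Phase~2 invariants: after every Phase~2 step, (a)~$T = B$, and (b)~$L_{t+1} \le L_t$. These follow from a two-case analysis of the arrival of a new edge $e$ starting from $T_t = B$: either (i)~$e$ fits into some existing matching $M_\ell$ with $\ell \le L_t$, in which case $T$ rises to $B+1$ and the subsequent discard from $M_{L_t}$ restores $T$ to $B$ and either leaves $L$ unchanged or decreases it by one (when $M_{L_t}$ empties); or (ii)~$e$ fits into no existing matching, in which case the algorithm first creates $M_{L_t+1} = \{e\}$ and then immediately discards this singleton, deleting $M_{L_t+1}$ and leaving $L_{t+1} = L_t$. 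In both cases $T$ returns to $B$ and $L$ does not increase.

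With these invariants in hand, both properties fall out quickly. For $(P_2)$: if no discard ever occurred then the first alternative of $(P_2)$ holds; otherwise Phase~2 was entered, and invariant (a) at termination yields $\sum_i |M_i| = B$, which is the second alternative. For $(P_1)$: let $L$ denote the final number of levels. Because $L_t$ is non-increasing throughout Phase~2 and ends at $L$, we have $L_t \ge L$ at every step of Phase~2. Therefore, for every $k \le L - 1$, the level $M_k$ is never the top level during Phase~2 and so never loses an edge; and Phase~1 contains no discards at all. Hence $M_1, \dots, M_{L-1}$ are only ever updated by the rule ``add the arriving edge to the smallest level that accepts it''. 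A straightforward induction on time then shows that these contents agree with \textsf{Hierachical-Greedy}$(L-1)$ run on the same insertion stream: at each step, if the smallest accepting level for $e$ among the current levels is in $[L-1]$ then both algorithms add $e$ to that same level, and otherwise both leave the first $L-1$ levels unchanged.

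The main obstacle is the Phase~2 case analysis, and in particular the somewhat subtle case (ii) where an arriving edge momentarily creates a new top level only to have it immediately discarded; one must verify that the net effect on $L$ is exactly zero (not a transient increase that could later leave orphan edges above level $L$), so that the non-increasing invariant is preserved. Once this is pinned down, the two claimed properties are short consequences of the monotonicity statements together with the inductive equivalence to \textsf{Hierachical-Greedy}$(L-1)$.
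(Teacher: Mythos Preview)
Your proposal is correct and takes essentially the same approach as the paper. The paper's proof of $P_1$ is phrased as a contradiction (if some level $i<L$ ever suffered a discard, then $L$ could never have grown beyond $i$ afterward), while you argue the contrapositive directly via the non-increasing invariant $L_{t+1}\le L_t$ in Phase~2; these are the same argument, and your explicit verification of the inductive equivalence to \textsf{Hierachical-Greedy}$(L-1)$ is a bit more detailed than the paper, which simply asserts it.
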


\begin{proof}
Regarding $P_1$, suppose for the sake of a contradiction that a level-$i$ edge with $i \in [L-1]$ was deleted in some iteration $J$ of the algorithm. This implies that, in iteration $J$, the algorithm maintained only $i$ matchings in total since deletions take place only at the highest level matching. We now claim that it is impossible for the algorithm to have created any additional levels beyond level $i$ in any subsequent iteration, after $J$. Since the budget of $B$ was already reached in iteration $J$, from this moment onwards, whenever an incoming edge is inserted into the matching data structure, an edge of the current highest level matching is deleted. If an incoming edge temporarily leads to the creation of a new level $i+1$ then the algorithm would have immediately deleted this edge again, keeping the number of levels to $i$. This is a contradiction to the fact that the algorithm created $L \ge i+1$ levels, which establishes $P_1$.

To prove property $P_2$, we observe that deletions are only triggered once the number of edges in the matchings stored reaches $B$. Once this happens, then whenever another edge is inserted into the matchings, some edge is also deleted, making sure that the total number of edges in the matchings remains $B$ until the end of the algorithm. This establishes $P_2$.
\end{proof}

We are now ready to specify our complete algorithm for achieving a $(2+\epsilon)$-approximation. We set $B = n+K/\epsilon$ and execute \textsf{Budgeted-Hierachical-Greedy}($B$). At the same time, we also separately store the set of $K$ deletions encountered in the stream. Let $M_1, M_2, \dots, M_{L}$ be the output of the run of \textsf{Budgeted-Hierachical-Greedy}($B$). We then apply the stored deletions to the matchings $M_1, M_2, \dots, M_{L}$ and output a largest matching among the edges $\cup_{1 \le i \le L} M_i$.

\setcounter{thmsaved}{\value{theorem}}
\setcounter{theorem}{\value{counterUB2}}
\addtocounter{theorem}{-1}

\begin{theorem}
For any $\epsilon > 0$, there is a deterministic single-pass streaming algorithm that uses $O((n + K/\epsilon) \cdot \log n)$ space and outputs a $(2+\epsilon)$-approximate matching in any dynamic graph stream with at most $K$ deletions.
\end{theorem}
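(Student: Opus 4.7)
The plan is to carry out a case analysis driven by the two properties of Lemma~\ref{lem:prefixHM}, combined with a pigeonhole argument on deletions and an ``almost-maximality'' lemma in the spirit of Lemma~\ref{lem:key-1}. The space bound is immediate: the algorithm stores at most $B = n + c\,K/\epsilon$ matching edges from \textsf{Budgeted-Hierachical-Greedy}$(B)$ together with the at most $K$ stream deletions, each using $O(\log n)$ bits. I will use a constant $c$ chosen at the end so that a final $2/(1-\epsilon')$-ratio can be rescaled to $2+\epsilon$. The post-processing step computes a maximum matching among the stored edges after applying $D$, which is tractable since everything is in memory.

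\textbf{Case 1 (no discards, by $P_2$).} If the algorithm never removed an edge from the hierarchy, every inserted edge is still present in some $M_j$, so $\bigcup_{j} M_j \setminus D = E$, and the returned matching is in fact a maximum matching of $G$. \textbf{Case 2 (discards, so $\sum_{j=1}^{L} |M_j| = B$).} By $P_1$, the matchings $M_1,\dots,M_{L-1}$ are intact, and since $|M_L| \le n/2$, we have $\sum_{j=1}^{L-1}|M_j| \ge B - n/2 \ge K/\epsilon$ (choosing $c$ suitably). A pigeonhole step then yields an index $i^* \in [L-1]$ with $|M_{i^*} \cap D| \le \epsilon\,|M_{i^*}|$: otherwise $|D| \ge \sum_{j=1}^{L-1} |M_j \cap D| > \epsilon \sum_{j=1}^{L-1} |M_j| \ge K$, a contradiction. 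Hence some lower-level matching loses at most an $\epsilon$-fraction of its edges to stream deletions.

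Starting from $M_{i^*} \setminus D$, I would greedily add surviving edges from $\bigcup_{j<i^*}(M_j \setminus D)$ to form a matching $M$ contained in the stored edges. The key claim, a variant of Lemma~\ref{lem:key-1}, is that every edge $(u,v) \in E$ with $u,v \notin V(M)$ has at least one endpoint in $V(M_{i^*} \cap D)$. The proof splits on the level $\ell$ at which $(u,v)$ was placed: if $\ell < i^*$, then by $P_1$ the edge survives in $M_\ell \setminus D$ and would have been tried in the greedy extension; if $\ell = i^*$, then $(u,v) \in M_{i^*} \setminus D \subseteq M$, contradicting $u,v \notin V(M)$; if $\ell > i^*$, the placement rule of \textsf{Budgeted-Hierachical-Greedy} forces some endpoint of $(u,v)$ to be in $V(M_{i^*})$ at arrival time, and since that endpoint is not in $V(M) \supseteq V(M_{i^*} \setminus D)$, it must lie in $V(M_{i^*} \cap D)$. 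Given this property, counting $M^*$-edges by how many endpoints lie in $V(M)$ gives at most $|V(M)| = 2|M|$ edges touching $V(M)$, and at most $|V(M_{i^*}\cap D)| \le 2\epsilon|M_{i^*}|$ edges avoiding it, so $|M^*| \le 2|M| + 2\epsilon|M_{i^*}|$. Combining with $|M| \ge (1-\epsilon)|M_{i^*}|$ yields $|M^*| \le 2|M|/(1-\epsilon)$, and since the algorithm returns a largest matching in the stored edges, its output is at least $|M|$; rescaling $\epsilon$ by a constant produces the claimed $(2+\epsilon)$-approximation.

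The main obstacle is establishing the almost-maximality claim in a way that cleanly handles all three cases for $\ell$, in particular invoking $P_1$ to ensure that lower-level edges actually survived to be tried in the greedy extension, and using the monotonicity of matchings during hierarchical insertion to argue that $V(M_{i^*})$ only grows over time. Secondary care points are fixing the constant $c$ so that the pigeonhole threshold and the final $2/(1-\epsilon)$-to-$(2+\epsilon)$ rescaling agree, and, as in Lemma~\ref{lem:key-1}, being careful with multi-edges in $I$ so that the order in which deletions are applied is consistent with how edges were placed across levels.
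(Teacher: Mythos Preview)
Your proposal is correct and follows essentially the same approach as the paper: split on $P_2$, apply pigeonhole over the $L-1$ intact levels using $\sum_j |M_j| \ge B - n/2$, and argue near-maximality of the greedy extension. The only stylistic difference is that where the paper runs a short thought experiment (keep the deleted edges of $M_{i^*}$ in place, invoke Lemma~\ref{lem:key-1} to get genuine maximality, then remove at most $\epsilon\,|M_{i^*}| \le \epsilon\,|M|$ edges), you instead start from $M_{i^*}\setminus D$ and prove the almost-maximality claim directly via your three-way split on the insertion level $\ell$; both routes yield the same $|M^*| \le 2|M|/(1-\epsilon)$ bound and the same rescaling. Your flagged care point about multi-edges in the $\ell < i^*$ case is exactly where the paper leans on the multiplicity argument inside Lemma~\ref{lem:key-1}.
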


\setcounter{theorem}{\value{thmsaved}}

\begin{proof}
First, regarding space, the algorithm stores overall  $O(n + \frac{K}{\epsilon})$ edges. Accounting $O(\log n)$ bits for the space required to store an edge, we obtain the claimed space bound.

Regarding the approximation factor, recall that $B = n + \frac{K}{\epsilon}$, denote by $M_1, \dots, M_L$ the output of \textsf{Budgeted-Hierachical-Greedy}($B$), and denote by $\tilde{M}_1, \dots, \tilde{M}_L$ the matchings $M_1, \dots, M_L$ after the deletions were applied. We distinguish two cases:

In the first case, we suppose that the matchings $M_1, \dots, M_{L}$ contain fewer than $B$ edges. Then, by Property P2 proved in Lemma~\ref{lem:prefixHM}, $\cup_{i=1}^{L}M_i$ consitutes the {\em entire stream} of inserted edges, and $\cup_{i=1}^L \tilde{M}_i$  constitutes all edges of the graph described by the input stream. A maximum matching in $\cup_{i=1}^L \tilde{M}_i$ is therefore also a maximum matching in the input graph. 

In the second case, suppose that $M_1, \dots, M_{L}$ contains $B$ edges.  Then, by property $P_1$ proved in Lemma~\ref{lem:prefixHM}, the matchings $M_1, , \dots, M_{L-1}$ constitute a hierarchical matching data structure with $L-1$ levels. Observe that $|\cup_{i=1}^{L-1}M_i|$ must be at least 
$$|\cup_{i=1}^{L-1}M_i| = B - |M_L| \ge B - \frac{n}{2} = (n/2) + K/\epsilon > n/2 \ ,$$ implying that $L-1 \ge 1$, that is,  we have at least one {\em complete} maximal matching. Let $m_i$ denote the number of edges in the matching $M_i$ for $1 \le i \le L-1$. 
We claim that there exists an index $j \in [L-1]$ such that $\tilde{M}_j \ge (1-\epsilon) M_j$, i.e., at most an $\epsilon$-fraction of $M_j$'s edges are deleted. This is easily seen using a proof by contradiction: if not, then the total number of deletions must be at least
$$\sum_{i=1}^{L-1} \epsilon \cdot m_i \ge \epsilon \cdot ((n/2) + K/\epsilon) > K,$$
which is a contradiction to our assumption that there are at most $K$ deletions. Consider now the following thought experiment. We do not apply any of the deletions to the matching $M_j$. We now perform the downward extension of $M_j$, adding to it greedily the edges in $\tilde{M}_1, \dots, \tilde{M}_{j-1}$. By Lemma~\ref{lem:key-1}, the resulting matching, say $M$, is a maximal matching in the final graph $G$. We can now remove at most $\epsilon \cdot m_j \le \epsilon |M|$ edges that were deleted in $M_j$, obtaining a $(2+\epsilon)$-approximate matching, as desired.
\end{proof}

\section{Conclusion}\label{sec:conclusion}
In this work, we initiated the study of \textsf{Maximal Matching} in  bounded-deletion streams. We settled the space complexity of both randomized and deterministic algorithms up to $\poly \log$ factors by given algorithms as well as lower bounds, where the randomized space complexity is $\tilde{\Theta}(n \cdot \sqrt{K})$, and the deterministic space complexity is $\tilde{\Theta}(n \cdot K)$. Our results constitute the first trade-off results between space complexity and the number of deletions for a fundamental graph problem. 

We hope that our work will spark work on other fundamental graph problems in the bounded-deletion stream setting, especially when the presence of deletions seems to significantly increase the space complexity of the streaming algorithm. Two natural and closely related problems for future investigation are approximate \textsf{Maximum Matching} and approximate \textsf{Minimum Vertex Cover}, where a refined understanding of space complexity in terms of the deletion parameter $K$ will reveal how the space complexity changes as we gradually move from insertion-only to unrestricted deletions streams. Another problem that will be interesting to study through the lens of bounded deletions is the problem of building low stretch graph spanners. Similar to \textsf{Maximum Matching} and \textsf{Minimum Vertex Cover}, there is a polynomial-factor separation in the space needed to build a low stretch spanner in an insertion-only stream, and in dynamic streams with unrestricted deletions.


Finally, we believe that our algorithmic approach of building a complementary pair of sketches, one that is  non-linear, focusing on the insertions, and another one that is non-uniform and linear, focusing on the dynamic component of the stream, will also prove useful for other graph problems.


\bibliography{kk25}

\appendix 

\section{Technical Lemma}
\begin{lemma}\label{lem:tech}
    Let $a,b,n$ be integers such that $a < n$ and $b \le n - a$. Then, the following inequality holds:
    $$\frac{ {n - a \choose b}}{{n \choose b}} \le \left(1 - \frac{b}{n} \right)^a \ .$$
\end{lemma}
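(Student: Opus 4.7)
\textbf{Proof proposal for Lemma~\ref{lem:tech}.} The plan is to reduce the claim to a term-by-term inequality obtained from a suitable factorization of the ratio of binomial coefficients. Expanding both binomials and cancelling common factorials, I would first rewrite
\[
\frac{\binom{n-a}{b}}{\binom{n}{b}} = \frac{(n-a)!\,(n-b)!}{n!\,(n-a-b)!}.
\]
The cleanest way to view this is to factor it as a telescoping product over the ``$a$ axis'' rather than the ``$b$ axis''. Specifically, I would group the factorials so as to isolate $a$ ratios, obtaining
\[
\frac{\binom{n-a}{b}}{\binom{n}{b}} = \prod_{i=0}^{a-1} \frac{n-b-i}{n-i} = \prod_{i=0}^{a-1}\left(1 - \frac{b}{n-i}\right).
\]

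Once the identity is in this form, the inequality is immediate. For each $i \in \{0,1,\dots,a-1\}$ we have $n - i \le n$, and since $b \ge 0$ this gives $\frac{b}{n-i} \ge \frac{b}{n}$, hence
\[
1 - \frac{b}{n-i} \;\le\; 1 - \frac{b}{n}.
\]
Multiplying these $a$ inequalities (all terms are nonnegative since $b \le n - a \le n - i$ ensures $1 - b/(n-i) \ge 0$) yields
\[
\frac{\binom{n-a}{b}}{\binom{n}{b}} \;\le\; \left(1 - \frac{b}{n}\right)^a,
\]
which is the desired bound.

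The only mildly subtle point, which I would verify carefully at the start, is that the product expansion is valid and every factor is nonnegative: this requires $a < n$ and $b \le n-a$, both of which are the hypotheses of the lemma. Everything else is routine algebra, so there is no substantive obstacle; the ``trick'' is simply to group the factorials so that the product ranges over $i = 0, \dots, a-1$ (a product of $a$ terms) rather than over $i = 0, \dots, b-1$, which is what makes the comparison with $(1-b/n)^a$ straightforward.
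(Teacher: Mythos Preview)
Your proposal is correct and follows essentially the same approach as the paper: both expand the ratio of binomials as the product $\prod_{i=0}^{a-1}\frac{n-b-i}{n-i}$ and then bound each factor by $\frac{n-b}{n}=1-\frac{b}{n}$. You are slightly more careful in noting the nonnegativity of the factors, but otherwise the arguments are identical.
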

\begin{proof}
We compute as follows:
    \begin{align*}
       \frac{ {n - a \choose b}}{{n \choose b}} & = \frac{(n-a)! \cdot (n-b)!}{(n-a-b)! \cdot n!} \\
       & = \frac{(n-b) \cdot (n-b-1) \cdot \ldots \cdot (n-a-b+1)}{n \cdot (n-1) \cdot \ldots \cdot (n-a+1)} \\
       & \le \left( \frac{n-b}{n} \right)^{a} = \left( 1 - \frac{b}{n} \right)^a \ ,
    \end{align*}
    where we used the inequality $\frac{n-b-j}{n-j} \le \frac{n-b}{n}$, for every $0 \le j \le a$.
\end{proof}

\end{document}